\newtheorem{theorem}{Theorem}
\newtheorem{lemma}{Lemma}
\newtheorem{corollary}{Corollary}
\newtheorem{proposition}{Proposition}
\newtheorem{remark}{\bf Remark}
\def\phi{\varphi}
\def\l{\left}
\def\r{\right}
\def\({\left(}
\def\){\right)}
\def\b0{{\mathbf{0}}}
\def\bP{{\mathbf{P}}}
\newcommand{\nn}{\nonumber}
\begin{document}

\title{\huge Cache-Enabled Heterogeneous Cellular Networks: \\
Optimal Tier-Level Content Placement}
\author{Juan Wen, Kaibin Huang, Sheng Yang and Victor O. K. Li  
\thanks{\noindent J. Wen, K. Huang, and Victor O. K. Li are  with the Deptment of Electrical and Electronic Engineering,  The University of  Hong Kong, Hong Kong, China (Email: jwen@eee.hku.hk, huangkb@eee.hku.hk, vli@eee.hku.hk).}
\thanks{\noindent S. Yang is with CentraleSupelec, Gif-sur-Yvette Cedex 91192, France (Email: 
sheng.yang@supelec.fr)}
\thanks{\noindent The work was supported by Hong Kong Research Grants Council under the PROCORE-France/Hong Kong Joint Research Scheme with the grant number F-HKU703/15T.}
}
\maketitle

\begin{abstract}
 
Caching popular contents at \emph{base stations} (BSs)  of a \emph{heterogeneous cellular
network} (HCN) avoids frequent information passage from content providers to the network
edge, thereby reducing   latency and alleviating traffic congestion in backhaul links. The
potential of caching at the network edge for tackling 5G challenges   has motivated the
recent studies of optimal  content placement in large-scale HCNs. However, due to the
complexity of network performance analysis, the existing strategies were designed mostly
based on approximation,  heuristics and intuition. In general, the optimal strategies for
content placement in HCNs remain largely unknown and deriving them forms the theme of this
paper. To this end, we adopt the popular random HCN model where  $K$ tiers of BSs are modeled
as independent \emph{Poisson point processes} (PPPs) distributed in the plane with different
densities. Further, the \emph{random caching} scheme is considered where each of a given set
of  $M$ files with corresponding popularity measures  is placed at each BS of a particular
tier with a corresponding  probability, called \emph{placement probability}. The
probabilities are identical for all BSs in the same tier but vary over tiers, giving the name
\emph{tier-level content placement}. We consider the network performance metric, \emph{hit
probability}, defined as the probability that a file requested by the typical user is
delivered successfully to the user. Leveraging existing results on HCN performance, we
maximize the hit probability over content placement probabilities, which yields the optimal tier-level placement policies. For the case of uniform  received signal-to-interference thresholds for successful transmissions for BSs in different tiers, the policy is in closed-form where the placement probability for a particular file is proportional to the \emph{square-root of the corresponding popularity measure} with an offset depending on BS caching capacities. For the general case of non-uniform SIR thresholds,  the optimization problem is non-convex and a sub-optimal placement policy is designed by approximation, which has  a similar structure as in the case of uniform SIR thresholds and shown by simulation to be close-to-optimal. 

\end{abstract}

\begin{IEEEkeywords}
Cache-enabled wireless  networks, heterogeneous cellular networks, content delivery, stochastic geometry. 
\end{IEEEkeywords}

\section{Introduction} 
The last decade has seen multimedia contents becoming dominant in mobile data traffic
\cite{Cisco2015-2020}. As a result, a vision for 5G wireless systems is to enable high-rate
and low-latency content delivery, e.g., ultra-high-definition video streaming
\cite{5G_Survey}. The key challenge for realizing this vision is that transporting large
volumes of data from content providers to end users causes severe traffic congestion in
backhaul links, resulting in rate loss and high latency \cite{5Gbackhual}.  On the other hand, the dramatic advancement of the  hard-disk technology makes it feasible  to deploy  large storage (several to dozens of TB) at  the network edge~(e.g., base stations (BSs) and dedicated access points) at low cost \cite{FemtoD2D_Caching}. In view of these, caching popular
contents at the network edge has emerged as a promising solution, where highly skewed  content popularity
is exploited to alleviate the heavy burden  on  backhaul networks and reduce latency in
content delivery \cite{CacheInAir,TechReprt_Cahing4G,5G_Cache_Architechture}.  Since popular contents vary at a time scale of several days \cite{ModelYoutube}, content placement can be performed every day during off-peak hours without causing an extra burden on the system. Compared with caching in wired
networks, the broadcast and superposition natures of the wireless medium make the optimal content placement in wireless networks a much more challenging problem and solving the problem has been the main theme in designing cache-enabled wireless systems and networks  \cite{LiveOnEdge}. Along the same theme, the current work considers caching for next-generation \emph{heterogeneous cellular networks} (HCNs), adopting the classic $K$-tier HCN model \cite{HCN-K-Tier}, and focuses on studying the optimal policy for placing contents in different BS tiers. 

\subsection{Related Work}
Extensive research has been conducted on studying the performance gain for joint
content-placement and wireless transmissions as well as designing relevant techniques.  From
the information-theoretic perspective, the capacity scaling laws were derived
for a large cache-enabled wireless network with a hierarchical tree structure
\cite{CachingWirelessNet}. In \cite{Caching_FundamentalLimits}, the novel idea of
integrating coding into user caching, called \emph{coded caching}, was proposed to improve
substantially the efficiency of content delivery over uncoded caching.
Specifically, exploiting joint coding of multiple files and the broadcast nature of downlink
channels, the content placement at BSs and delivery were jointly optimized to minimize the
communication overhead for content delivery. Coded caching in an erasure broadcast channel
was then studied in \cite{CachingErasureBroadcast} where the optimal capacity region has
been derived in some cases. In parallel, extensive research has also been carried out on the
more practical  uncoded caching where the focus is the design of strategies for
content-placement at BSs (or access points) to optimize  the network performance in terms of
the expected time for file downloading. Since optimal designs are NP-hard in
general~\cite{FemtoCaching_Distributed,FemtoCaching_BP}, most research has resorted to sub-optimal techniques with close-to-optimal performance. Specifically, practical algorithms have been designed for caching contents distributively at access points dedicated for content delivery using greedy algorithms \cite{FemtoCaching_Distributed}  and the theory of belief-propagation \cite{FemtoCaching_BP}. Furthermore, joint transmission and caching can  further improve the network performance \cite{Caching_Multirelay, CachingRelayNet_CrossLayer, CahicngMIMO_V.Lau}. Suboptimal solutions were developed  to maximize the quality of service for multi-relay networks \cite{Caching_Multirelay} and two-hop relaying network \cite{CachingRelayNet_CrossLayer} via decomposing the original problem into several simpler sub-problems. Considering the opportunistic cooperative MIMO, schemes were presented  in \cite{CahicngMIMO_V.Lau} to leverage multi-time-scale joint optimization of power and cache control to enable  real-time video streaming. Recent advancements in wireless caching techniques have been summarized  in various journal special issues and survey articles (see e.g., \cite{Caching_Misconception}).

It is also crucial to understand the performance gain that caching brings to large-scale
wireless networks. Presently, the common approach is to model and design cache-enabled
wireless networks using stochastic geometry. The approach leverages the availability of a
wide range of existing stochastic geometric network models, ranging from
\emph{device-to-device}~(D2D)~networks to HCNs,  and relevant results by adding caching
capacities to network nodes~\cite{CacheD2D_mobility,
CacheD2D_FiniteNetwork,CacheHWN_RelayUser,CacheHCNs_EURASIP,OptimalGeographicCaching,OptimalContentDis_Popularity,CooperativeCaching,CacheMulticast_HWN}.
In the resultant models, BSs and mobiles are typically distributed in the 2-dimensional (2D) plane as Poisson
point processes~(PPPs). Despite their similarity in the network nodes'~spatial
distributions, the cache-enabled networks differ from the traditional networks without
caching in their functions, with the former aiming at efficient content delivery and the latter
at reliable communication. Correspondingly, the performance of a cache-enabled network is
typically measured using a metric called \emph{hit probability}, defined as the probability
that a file requested by a typical user is not only available in the network  but can also 
be wirelessly delivered to the user~\cite{OptimalGeographicCaching}.  Based on
stochastic-geometry network models, the performance of cache-enabled D2D
networks~\cite{CacheD2D_mobility, CacheD2D_FiniteNetwork} and
HCNs~\cite{CacheHWN_RelayUser,CacheHCNs_EURASIP} were analyzed in terms of hit probability
as well as average throughput. For small-cell networks, one design challenge is that the
cache capacity limitation of BSs affects the  availability of  contents with low and
moderate popularity. A solution was proposed in~\cite{CooperativeCaching} for multi-cell
cooperative transmission/delivery in order to enhance the content availability.
Specifically, the proposed content-placement strategy is to partition the cache of each BS
into two halves for  storing both the most popular files and fractions of other files; then
multi-cell cooperation effectively integrates  storage spaces at cooperative BSs into  a
larger cache to increase content availability for improving the network hit probability.
Based on approximate performance analysis, the content-placement strategy derived in~\cite{CooperativeCaching}  is heuristic and the optimal one remains unknown. 

In the aforementioned work, the content placement at cache-enabled nodes is
\emph{deterministic}. An alternative strategy is  \emph{probabilistic (content) placement}
where a particular file is placed in the cache of  a network node (BS or mobile) with a
given probability \cite{OptimalGeographicCaching,OptimalContentDis_Popularity}, called
\emph{placement probability}. The strategy has also been considered in designing large-scale
cache-enabled networks \cite{OptimalGeographicCaching, OptimalContentDis_Popularity}. The
key characteristic of probabilistic  placement is that all files with nonzero placement
probabilities are available in a large-scale network with their spatial densities
proportional to the probabilities. Given its random nature,  the strategy fits the
stochastic-geometry models better than the deterministic counterpart as the former allows
for tractable analyses for certain networks as demonstrated in this work. The placement
probabilities for different content  files were optimized to maximize the hit probability for
cellular networks in \cite{OptimalGeographicCaching} and for D2D networks in
\cite{OptimalContentDis_Popularity}. It was  found therein  that the optimal placement
probabilities are highly dependent on, but not identical to, the (content) \emph{popularity
measures}, defined as the content-demand distribution over files as they are also functions
of network parameters, e.g., wireless-link reliability  and cache capacities. To improve
content availability, a hybrid scheme combining deterministic and probabilistic content
placement was proposed in \cite{CacheMulticast_HWN} for HCNs with multicasting where the
most popular files are cached at every macro-cell BS and different combinations of other
files are randomly cached at pico-cell BSs. Similar to the strategy in \cite{CooperativeCaching}, the
proposed strategy in \cite{CacheMulticast_HWN} does not lead to tractable
network-performance analysis and was optimized for the approximate hit probability. 

\subsection{Motivation, Contributions and Organization}
HCNs are expected to be deployed as next-generation wireless networks supporting content delivery besides communication and mobile computing \cite{LiveOnEdge}. In view of prior work, the existing strategies for content placement in large-scale HCNs are mostly heuristic and the optimal policies  in closed-form remain largely unknown, even though existing results reveal their various properties and dependence on network parameters.  This motivates the current work on analyzing the structure of the optimal  content-placement policies for HCNs. 

To this end, the cache-enabled HCN is modeled by adopting the classic $K$-tier HCN model for
the spatial distributions of BSs and mobiles \cite{HCN-K-Tier}. To be specific, the
locations of different tiers of BSs and mobiles are modeled as independent homogeneous PPPs
with non-uniform densities. Besides density, each tier  is characterized  by a set of
additional parameters including BS transmission power,  finite cache capacity and minimum
received \emph{signal-to-interference} (SIR) threshold required for successful content
delivery. Note that the use of SIR is based on the implicit assumption that the network is
interference limited. A user is associated with the nearest  BS where the requested file is
available. It is assumed that there exists a content database comprising $M$  files
characterized by corresponding popularity measures. Each user generates a random request for
a particular file based on the discrete  popularity distribution. In the paper, we propose a
tractable approach of probabilistic \emph{tier-level content placement (TLCP)} for the  HCN
where the placement probabilities are identical for all BSs belonging to the same tier but
are different across tiers.  The goal of the current work is to analyze the structure of the optimal policies for TLCP given the network-performance metric of hit probability.  The main contributions are summarized as follows.

\begin{enumerate}
  \item \textbf{Hit Probability Analysis}. By extending the results on outage probability for HCNs in~\cite{HCN-K-Tier}, the hit probability for cache-enabled HCNs are derived in closed form.  The results reveal that the metric is determined not only by the physical-layer related
  parameters,  including BS density, transmission power, and path-loss exponent, but also
  the content-related parameters,  including content-popularity measures  and placement
  probabilities. With uniform SIR thresholds for all tiers, the hit probability is observed
  to be a monotone increasing function of the placement probability and converges to a
  constant independent of BS density and transmission power as the placement probabilities
  approach $1$.

\item \textbf{Optimal Content Placement for Multi-Tier HCNs}. For a multi-tier HCN, the
  placement probabilities form an $M\times K$  matrix whose rows and columns correspond to the $M$ files
  and the $K$ tiers, respectively. First, consider a multi-tier HCN with uniform SIR thresholds
  for all tiers. Building on the results derived for  single-tier HCNs, a weighted sum (over tiers) of
  the placement probabilities for a particular file has the structure that it is \emph{proportional to the square root of the popularity measure with a fixed offset}. 
  Using this result, we derive the expressions for individual placement probabilities and
  reveal a useful structure allowing for a simple sequential computation of the
  probabilities. An algorithm is proposed to realize the aforementioned procedure. 
  Next, consider the general case of a multi-tier HCN with non-uniform SIR thresholds for
  different  tiers. In this case, finding the optimal content placement is non-convex and it
  is thus difficult to derive the optimal policy in closed-form. However, a sub-optimal
  algorithm can be designed leveraging the insights from the optimal policy structures for
  the previous cases. Our numerical results show that the performance of the proposed scheme
  is close-to-optimal.

\end{enumerate}

The remainder of the paper is organized as follows. The network model and metric are
described in Section II. The hit probability and optimal content placement for cache-enabled
HCNs are analyzed in Sections III and IV, respectively. Numerical results are provided in Section V followed by the conclusion in Section VI.

\section{Network Model and Metric}
In this section, we describe the mathematical model for the cache-enabled HCN illustrated
in~Fig.~\ref{Fig:Cache-enabled HCNs} and define its performance metric. The symbols used therein and their meanings are tabulated in Table~\ref{tab:table1}.

\begin{table}[t!]
  \centering
  \caption{Summary of Notations}
  \label{tab:table1}

  \begin{tabular}{cl}

    \toprule
    Symbol & Meaning \\
    \midrule

    $K$ & Total number of tiers in a HCN\\
    $M$ &  Total number of files in a datbase\\
    $\mathcal{F}_m$ &  The $m$-th file\\
    $\Phi_{k}$ & Point process of BSs in the $k$-th tier\\
    $\Phi_{mk}$, $\Phi_{mk}^c$ & Point process of BSs in the $k$-th tier \emph{with, without} file $\mathcal{F}_m$\\
    $\lambda_k$, $P_k$ & Density and transmission power of BSs in the $k$-th tier\\
    $\beta_k$  & SIR threshold of BSs in the $k$-th tier\\
    $h$  & Rayleigh fading gain with unit mean\\
    $\alpha$ & Path-loss exponent\\
    $C_k$ & Cache capacity of BSs in the $k$-th tier\\
    $q_m$ & Popularity measure for file $\mathcal{F}_m$\\
    $p_{mk}$ & Placement probability for file $\mathcal{F}_m$ in the k-th tier BSs \\
    
    \bottomrule

  \end{tabular}

\end{table}

\subsection{Network Topology}
The spatial distributions of BSs are  modeled using the classic $K$-tier
stochastic-geometry model for the HCN described as follows \cite{HCN-K-Tier}. The network
comprises $K$ tiers of BSs modeled as $K$ independent homogeneous PPPs distributed in the
plane. The $k$-th tier is denoted by $\Phi_{k}$ with the BS density  and transmission power
represented by $\lambda_{k}$ and $P_k$, respectively. Assuming an interference-limited network, the transmission by a BS to an
associated user is successful if the received SIR  exceeds a given threshold, denoted by
$\beta_k$, identical for all links in  the $k$-th tier.

We consider a particular frequency-flat channel, corresponding to a single frequency sub-channel of a broadband system.  Single antennas are deployed at all BSs and users. Furthermore, the BSs are assumed to transmit continuously  in the  unicast mode. The users are assumed to be Poisson distributed. As a result,  based on Slyvnyak's theorem \cite{StoGeoBook-Martin}, it is sufficient to consider in the network-performance analysis a typical user located at the origin, which yields the expected  experience for all users. The
channel is modeled in such a way that the signal power received at the user from a $k$-th tier BS located at $X_k \in \mathbb{R}^2$ is given by $P_{k}h_{X_k}
||X_{k}||^{-\alpha}$, where the random variable $h_{X_k}\sim \text{exp}(1)$ models
the Rayleigh fading and $\alpha>2$ is the path-loss exponent \footnote{In practice, the path-loss exponent may vary over the tiers. The corresponding conditional hit probability does not have a closed-form expression as in Lemma~3, resulting in an intractable  optimization problem. The current solution for the simpler case of uniform path-loss exponent can provide useful insights into designing practical content placement schemes for the said general case.}. Based on the channel model \footnote{The effect of shadowing on network performance is omitted in the current model for simplicity but  can be captured by modifying the model following the method in  \cite{ShadowingEffect}, namely appropriately scaling the transmission power of BSs in each tier. However, the corresponding modifications of the analysis and algorithmic design are straightforward without changing the key results and insights.}, the
interference power measured at the typical user, denoted by $I_0$, can be written as
 \begin{equation}
I_0 = \sum\nolimits_{k=1} ^{K} \sum\nolimits_{X \in \Phi_k\backslash X_k}P_k h_{X} ||X||^{-\alpha}
\label{Eq:Interf_org}, 
\end{equation}
where the  fading coefficients $\{h_X\}$  are assumed to be independent and identically distributed (i.i.d.).
\begin{figure}
\centering \includegraphics[width=8.5cm]{./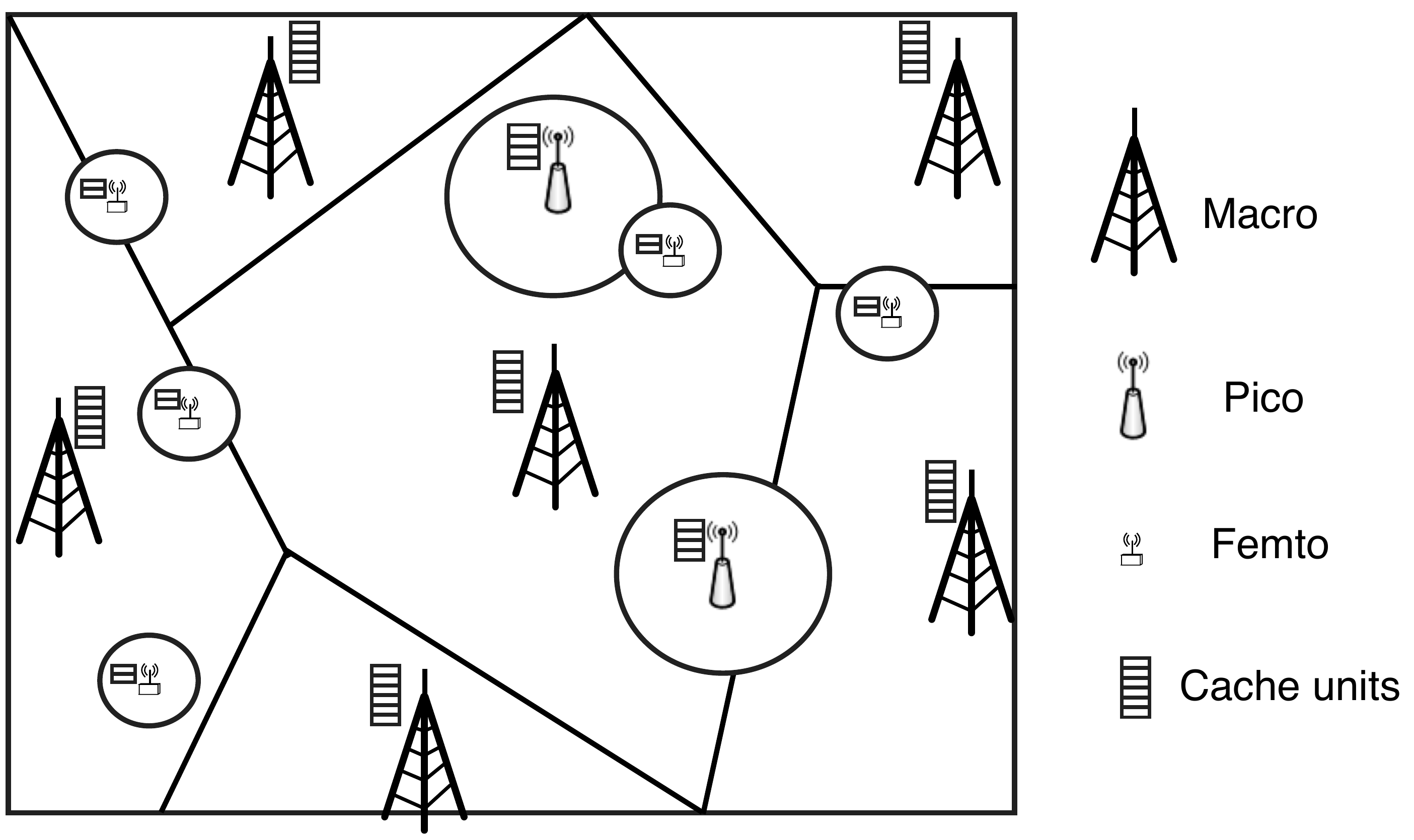}
\caption{A cache-enabled heterogenous cellular network.}\label{Fig:Cache-enabled HCNs}
\end{figure}

\subsection{Probabilistic Content Placement}\label{Section:PlaceModel}
In this paper, we consider a content (such as video file) database containing $M$ files with normalized size equal to $1$ following the literature  \cite{OptimalGeographicCaching,CooperativeCaching,CacheMulticast_HWN}.\footnote{In practice, a large content file to be cached is usually divided in to units of equal sizes because they have  different popularity. For instance, the beginning $1$-minute of a YouTube video is much more popular than the remainder. Thus, to be precise, the equal-size files considered in this paper correspond to content units in practice.} As
illustrated in Fig.~\ref{Fig:Cache-enabled HCNs}, the BSs from different tiers are assumed
to have different cache capacities which are denoted by $C_k$ for the $k$-th tier with $k =
1, 2, \cdots, K$.  We make the practical assumption that not all BSs have sufficient
capacities for storing the whole database, i.e., $C_k \leq M, \forall k$. We adopt a probabilistic content placement
scheme similar to the one in \cite{OptimalGeographicCaching} to randomly select files for
caching at different tiers under their cache capacity constraints: 
\begin{equation}\label{Eq:Cap:Const}
\sum_{m=1}^{M}p_{mk}\leq C_{k},   \forall k.  
\end{equation}
Specifically, the $m$-th
file, denoted by  $\mathcal{F}_m$, is cached at a tier-$k$ BS with a fixed probability
$p_{mk}$ called a \emph{placement probability}. The placement probabilities, $(p_{1k}, p_{2k}, \cdots, p_{Mk})$,
are identical for all BSs in the same tier $k$, $k=1,\ldots,K$. They specify  the  \emph{tier-level content placement} (TLCP). Grouping the placement probabilities yields the following \emph{placement probability matrix}:
\begin{align}
\qquad {\bf P} = \;\left[ {\begin{array}{*{20}{c}}
 p_{11}&p_{12}& \cdots &p_{1K}\\
p_{21}&p_{22}& \cdots &p_{2K}\\
 \vdots & \vdots  &{}& \vdots \\
p_{M1}&p_{M2}& \cdots &p_{MK}
\end{array}} \right].  \label{Eq:Place:Matrix}
\end{align}
The rows and columns of $\bP$ correspond to different files and different tiers, respectively. Given the placement probabilities in $\bf P$ and under the cache-capacity constraints in \eqref{Eq:Cap:Const},  there exist specific  strategies  of randomly placing contents at individual  BSs such that File $\mathcal{F}_m$ is available at a tier-$k$ BS with a  probability exactly equal to  $p_{mk}$\cite{OptimalGeographicCaching}. One  of such strategies   is illustrated in \cite[Fig. 1]{OptimalGeographicCaching}. Given the existence of random-placement strategies for achieving the content availability specified by $\bP$, this paper focuses on optimizing $\bP$ for maximizing the hit probability.

The files in the content database differ in popularity, measured by a corresponding set of values $\{q_m\}$ with $q_m \in [0,1]$ for all $m$ and $\sum_{m=1}^{M}q_{m}=1$ \cite{OptimalGeographicCaching,OptimalContentDis_Popularity,CooperativeCaching,CacheMulticast_HWN}. This set is a probability mass function such that the typical user requests file $\mathcal{F}_m$ with probability $q_m$. Without loss of generality, it is assumed that the files are ordered in
decreasing popularity, i.e., $q_1>q_2>\cdots>q_M$.

\subsection{Content-Centric Cell Association}
Content-centric cell association accounts for both the factor of link
reliability and the factor of content availability. We adopt a common scheme that associates a user with
the BS that maximizes the received signal power among those having the requested file   (see e.g., \cite{CacheMulticast_HWN, Cui_SingleTierCacheNetwork}).\footnote{\emph{Coordinated multiple access point} (CoMP) defined in the LTE standard can be applied  to improve the network performance via associating each  user with multiple BSs. Adopting the technology in the current network model does not lead to tractable analysis. However, it is possible to develop practical content-delivery schemes for HetNets with CoMP by integrating  the current optimal TL content placement and the design of cooperative content delivery in \cite{CooperativeCaching}.} It is important to note that due to limited BS storage, the database cached at BSs  is only the popular subset of all contents. Thus, it is possible that a file requested by a user is unavailable at the network edge, which has to be retrieved from a data center across the backhaul network. In such cases, the classic cell association rule is applied to connect the user to the nearest BS. These  cases occur infrequently and furthermore are outside the current scope of content placement at the network edge. Thus,  they are omitted in our analysis for simplicity following the common approach in the literature (see e.g., \cite{Cui_SingleTierCacheNetwork}). For ease of exposition, we partition 
the HCN into $M\times K$ effective tiers, called the \emph{content-centric tiers},
according to the file availability within each tier.  
The $(m, k)$-th content-centric tier refers to the process of tier-$k$ BSs with file
$\mathcal{F}_m$, denoted by $\Phi_{mk}$, while the remaining tier-$k$ BSs are denoted by
$\Phi_{mk}^c$ with $\Phi_{mk}\cup \Phi_{mk}^c = \Phi_{k}$. Due to the probabilistic
content placement scheme, $\Phi_{mk}$ and  $\Phi_{mk}^c$ are independent PPPs with densities
$p_{mk}\lambda_k$ and $(1 - p_{mk})\lambda_k$, respectively. A user is said to be
associated with the $(m,k)$-th content-centric tier if the user requests $\mathcal{F}_m$ and
is served by a tier-$k$ BS. Then, conditioned on the typical user requesting file
$\mathcal{F}_m$, the serving BS $X_k$ is given by
\begin{equation}
\textrm{(Cell Association)}\ \ X_k =\arg \max_{X\in\bigcup_k \Phi_{mk}} P_X  ||X||^{-\alpha},
\end{equation}
where $P_X$ denotes BS~$X$'s transmission power. In addition, conditioned on the typical
user requesting file $\mathcal{F}_m$, the interference power ${I_0}$ in \eqref{Eq:Interf_org} can be written in terms of the content-centric tiers as:
\begin{equation}\label{Eq:I:Cond}
\begin{aligned}
{I_0}(\mathcal{F}_m) &=\sum\nolimits_{k=1}^{K} \sum\nolimits_{X \in \Phi_{mk} \setminus X_k} P_k h_X ||X||^{-\alpha}  +\\ &\qquad \sum\nolimits_{k=1}^{K} \sum\nolimits_{X \in \Phi_{mk}^{c}} P_k h_X ||X||^{-\alpha} .
\end{aligned}
\end{equation}

\subsection{Network Performance Metric}
The network performance is measured by the \emph{hit probability}  defined as the
probability that a file the typical user requested is not only cached at a BS but also
successfully delivered by the BS over the wireless channel (see e.g.,
\cite{OptimalGeographicCaching}). By definition, the hit probability quantifies the reduced fraction of backhaul load.  In addition, it can also indicate the reduction of mean latency in the backhaul network (see Appendix \ref{App:P_Delay} for details). Therefore, we use the hit probability as the main network performance metric in this paper. For the
purpose of analysis,  let $\mathcal{P}$ denote the (unconditional) hit probability, $\mathcal{P}_{m}$ denote  the conditional hit probability
given that the typical user requests file $\mathcal{F}_m$, and $q_m$ denote the content popularity for file $\mathcal{F}_m$. Then
\begin{equation}
\mathcal{P}=\sum_{m=1}^{M} q_{m} \mathcal{P}_{m}.  \label{Eq:HitProb_Def}
\end{equation}
Furthermore, define the \emph{association probability} indexed by $(m, k)$, denoted by
$A_{mk}$, as the probability that the
 typical user is associated with the $(m, k)$-th content-centric tier. The hit probability
 conditional on this event is represented by $\mathcal{P}_{mk}$. It follows that
 \begin{equation}
\mathcal{P}_{m} = \sum_{k=1}^{K} A_{mk} \mathcal{P}_{mk}.  \label{Eq:HP_File_def}
\end{equation}

\section{Analysis of Hit Probability }
In this section, the hit probability for  the cache-enabled HCN is calculated.   To this end, the association probabilities and the probability density function (PDF) of the serving distances are derived in the following two lemmas, via directly modifying Lemmas~$3$ and $5$ in  \cite{FlexibleCellAssociation} enabled by the interpretation of the HCN as one comprising $M\times K$ content-centric  tiers (see Section~\ref{Section:PlaceModel}).

\begin{lemma} [Association Probabilities] \label{Lemma:Association probability}
\emph{The association  probability that the typical user belongs to the $(m, k)$-th effective tier is given as
\begin{equation}
A_{mk}=\frac{p_{mk}\lambda_{k}P_{k}^{\delta}}{\sum_{j=1}^{K}p_{mj}\lambda_{j}P_{j}^{\delta}},  \label{eq:AssocProb}
\end{equation}
where the constant $\delta = \frac{2}{\alpha}$.
}
\end{lemma}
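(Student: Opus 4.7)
The plan is to reduce the lemma to the standard cell-association computation for a multi-tier PPP-based HCN, with the twist that the $K$ original tiers are now reinterpreted as the $M\times K$ content-centric tiers $\{\Phi_{mk}\}$. Because the placement at different BSs is independent, $\Phi_{mk}$ is an independent thinning of $\Phi_k$ with retention probability $p_{mk}$, and thus an independent homogeneous PPP of density $p_{mk}\lambda_k$. Conditioning on the typical user requesting file $\mathcal{F}_m$ restricts the pool of candidate serving BSs to $\bigcup_{k=1}^K \Phi_{mk}$, so only these content-centric tiers enter the association rule.

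Next, I would translate the association rule $X_k=\arg\max_{X\in\bigcup_k \Phi_{mk}} P_X\|X\|^{-\alpha}$ into a distance comparison. For each tier $k$, let $R_k$ be the distance from the origin to the nearest point of $\Phi_{mk}$. By the standard void-probability argument for a 2D PPP, $R_k$ has PDF $f_{R_k}(r)=2\pi p_{mk}\lambda_k r\exp(-\pi p_{mk}\lambda_k r^2)$ and $\mathbb{P}(R_j>r)=\exp(-\pi p_{mj}\lambda_j r^2)$, where these random variables are mutually independent across $k$ by the independence of the thinnings. The user is associated with the $(m,k)$-th content-centric tier iff $P_k R_k^{-\alpha}>P_j R_j^{-\alpha}$ for all $j\neq k$, equivalently $R_j>(P_j/P_k)^{1/\alpha}R_k$ for all $j\neq k$.

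Then I would compute
\begin{equation*}
A_{mk}=\int_0^\infty f_{R_k}(r)\prod_{j\neq k}\mathbb{P}\!\left(R_j>(P_j/P_k)^{1/\alpha}r\right)dr,
\end{equation*}
and substitute the void probabilities. The exponents combine into a single quadratic in $r$ with coefficient $\pi\sum_{j=1}^K p_{mj}\lambda_j (P_j/P_k)^{\delta}$ (including the $j=k$ term from $f_{R_k}$), with $\delta=2/\alpha$. The resulting Gaussian-type integral evaluates in closed form to $\frac{p_{mk}\lambda_k}{\sum_j p_{mj}\lambda_j (P_j/P_k)^{\delta}}$, and multiplying numerator and denominator by $P_k^{\delta}$ yields the stated expression.

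There is no real obstacle here: the argument is a verbatim adaptation of Lemma~3 of \cite{FlexibleCellAssociation} to the content-centric tier decomposition, and the only thing to verify carefully is that the independent thinning underlying the TLCP scheme really does produce mutually independent PPPs $\Phi_{mk}$ across $k$ for a fixed $m$, so that the distances $\{R_k\}$ in the integral above are independent. This is guaranteed by the assumption that placement decisions are made independently across BSs together with the independence of the underlying PPPs $\{\Phi_k\}_{k=1}^K$.
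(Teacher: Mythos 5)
Your proposal is correct and follows essentially the same route as the paper's proof in Appendix~B: both reduce the association event to the distance comparison $R_{mj}>(P_j/P_k)^{1/\alpha}R_{mk}$, use the void probability of the thinned PPPs to obtain $\mathbb{P}[R_{mj}>r]$ and the PDF of $R_{mk}$, and evaluate the resulting integral. Your explicit remarks on the independence of the thinnings across tiers and the closed-form evaluation of the final integral are details the paper leaves implicit, but the argument is the same.
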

\begin{proof}
See Appendix \ref{Pf: Association-Probability}. 
\end{proof}
The result in Lemma~\ref{Lemma:Association probability} shows that the typical user requesting
a particular file is more likely to be associated with one of those tiers having not only
larger placement probability but also  denser  BS or higher BS transmission power, aligned with
intuition. In addition, it is shown that if $\delta$ is small, the placement probability and BS density have more
dominant effects on determining the association probability than transmission power, since $P_{j}^{\delta}$ converges to one for all $j$ as $\delta \rightarrow 0$. 

\begin{lemma}[Statistical Serving Distances]\label{Lemma: Serving distance}
\emph{
The PDF of the serving distance between the typical user and the associated BS in the $(m, k)$-th effective tier is given as
\begin{equation}
f_{R}(r)\!=\!\frac{2\pi p_{mk}\lambda_{k}}{A_{mk}}r\exp\left(
-\pi\sum_{j=1}^{K}p_{mj}\lambda_{j}\left(\frac{P_{j}}{P_{k}}\right)^{\delta}r^{2}\right), \label{eq:pdfServingDis}
\end{equation}
where $A_{mk}$ is given in  \eqref{eq:AssocProb}.
}
\end{lemma}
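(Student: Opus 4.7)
\emph{Proof plan.} The strategy is to adapt the standard nearest-BS computation of Jo \emph{et al.}~\cite{FlexibleCellAssociation} by treating the $M\times K$ content-centric tiers $\{\Phi_{mj}\}$ as independent homogeneous PPPs. Conditioning on the typical user requesting file $\mathcal{F}_m$, the candidate serving BSs are exactly $\bigcup_{j=1}^K \Phi_{mj}$, and since the placement is an independent thinning of $\Phi_j$, the point processes $\Phi_{m1},\ldots,\Phi_{mK}$ are mutually independent with densities $p_{m1}\lambda_1,\ldots,p_{mK}\lambda_K$.

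First I would introduce, for each $j\in\{1,\ldots,K\}$, the distance $R_j$ from the origin to the nearest point of $\Phi_{mj}$. By the void probability of a homogeneous PPP, $R_j$ has probability density
\begin{equation}
f_{R_j}(r)=2\pi p_{mj}\lambda_j\, r\, \exp\!\left(-\pi p_{mj}\lambda_j r^{2}\right),\qquad r\ge 0,
\end{equation}
and the $\{R_j\}$ are independent by the independence of the $\{\Phi_{mj}\}$. Using the cell-association rule $X_k=\arg\max_{X\in\bigcup_j\Phi_{mj}} P_X\|X\|^{-\alpha}$, the user is associated with the $(m,k)$-th content-centric tier if and only if $P_k R_k^{-\alpha}\ge P_j R_j^{-\alpha}$ for every $j\ne k$, equivalently $R_j\ge (P_j/P_k)^{1/\alpha}R_k$ for all $j\ne k$.

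Next, I would compute the joint ``density'' of the events $\{R_k\in(r,r+dr)\}$ and $\{\text{user associated with }(m,k)\}$. By conditioning on $R_k=r$ and using independence,
\begin{align}
&\mathbb{P}\!\left(R_k\in(r,r+dr),\ \text{assoc. with }(m,k)\right) \nonumber\\
&\quad= f_{R_k}(r)\prod_{j\ne k}\mathbb{P}\!\left(R_j\ge (P_j/P_k)^{1/\alpha}r\right)dr \nonumber\\
&\quad=2\pi p_{mk}\lambda_k\, r\, \exp\!\left(-\pi\sum_{j=1}^{K}p_{mj}\lambda_j(P_j/P_k)^{\delta}r^{2}\right)dr,
\end{align}
where the $j=k$ term (equal to $\exp(-\pi p_{mk}\lambda_k r^2)$) is absorbed into the sum since $(P_k/P_k)^\delta=1$, and I used $\mathbb{P}(R_j\ge s)=\exp(-\pi p_{mj}\lambda_j s^2)$ together with $2/\alpha=\delta$.

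Finally, the conditional PDF of the serving distance given association with the $(m,k)$-th tier is obtained by dividing by the unconditional association probability $A_{mk}$ given in Lemma~\ref{Lemma:Association probability}, yielding exactly the claimed expression~\eqref{eq:pdfServingDis}. The main obstacle is purely notational rather than conceptual: one must carefully keep track of the thinning that separates $\Phi_{mk}$ from $\Phi_{mk}^c$ and justify that only the former supplies candidate serving BSs, so that the weighted nearest-BS argument of \cite{FlexibleCellAssociation} carries over verbatim to the content-centric tiers.
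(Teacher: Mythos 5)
Your argument is correct and is precisely the standard nearest-point/independent-thinning computation that the paper invokes when it says Lemmas~\ref{Lemma:Association probability} and \ref{Lemma: Serving distance} follow by ``directly modifying'' Lemmas~3 and~5 of \cite{FlexibleCellAssociation}; the joint density $f_{R_k}(r)\prod_{j\ne k}\mathbb{P}(R_j\ge (P_j/P_k)^{1/\alpha}r)$ divided by $A_{mk}$ is exactly how that reference (and the paper's own proof of Lemma~\ref{Lemma:Association probability} in the appendix) proceeds. No gaps; the observation that within a tier the strongest candidate is the nearest BS, so only the per-tier nearest distances $R_j$ matter, is the one point worth stating explicitly and you have it.
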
 

Next, we are ready to derive the hit probabilities using Lemmas~\ref{Lemma:Association
probability} and \ref{Lemma: Serving distance}. For ease of notation, we define
the following two functions $Q(\beta_k)$ and $V(\beta_k)$, which are related to the interference coming from the BSs \emph{with} and \emph{without} the file $\mathcal{F}_m$, respectively:
\begin{align}
Q(\beta_k) &=\frac{\delta \beta_k}{1-\delta} \;{}_{2}F_{1}[1,1-\delta;2-\delta;-\beta_k], \label{Eq:Q_Fun}\\
V(\beta_k) &=\beta_k^\delta \; \delta \pi \csc(\delta \pi),\label{Eq:V:Fun}
\end{align}
where  \!${}_{2}F_{1}[\cdot]$\! denotes the Gauss hypergeometric function and $\csc(.)$ is the cosecant-trigonometry function. To further simplify the expression of hit probability, we define the following function:
\begin{equation}
W(\beta_k) = 1+Q(\beta_k)-V(\beta_k). \label{Eq:T:Fun}
\end{equation}
Then the conditional hit probability can be written as shown in the following lemma.
\begin{lemma}[Conditional Hit Probability]\label{Lem:ConditionalHitProb}
\emph{In the  cache-enabled HCN, the conditional hit probability for the typical user requesting file $\mathcal{F}_{m}$ is given as
\begin{equation}
\mathcal{P}_{m}=\sum_{k=1}^{K} \frac{p_{mk} \lambda_k
P_k^{\delta}}{W(\beta_k)\sum\limits_{i=1}^{K}p_{mi} \lambda_i P_i^{\delta}+V(\beta_k)
\sum\limits_{i=1}^{K}\lambda_i P_i^{\delta}}, \label{Eq:ConHitProb}
\end{equation}
where the functions $V(\cdot)$ and $W(\cdot)$ are defined in \eqref{Eq:V:Fun} and \eqref{Eq:T:Fun}, respectively. 
}
\end{lemma}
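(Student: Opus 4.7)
The plan is to condition first on the typical user requesting file $\mathcal{F}_m$ and being associated with a BS in the $(m,k)$-th content-centric tier at distance $r$, evaluate the SIR coverage probability in that conditional setting as a Laplace transform of the interference, then de-condition using Lemma~\ref{Lemma: Serving distance} and assemble the answer via $\mathcal{P}_m = \sum_k A_{mk}\mathcal{P}_{mk}$ with $A_{mk}$ from Lemma~\ref{Lemma:Association probability}. The Rayleigh-fading assumption is what makes this tractable: since $h_{X_k}\sim\exp(1)$, one has the standard identity
\[
\mathbb{P}\bigl(\mathrm{SIR}\geq\beta_k\mid X_k,\;\|X_k\|=r\bigr)=\mathcal{L}_{I_0(\mathcal{F}_m)}\!\bigl(\beta_k r^\alpha/P_k\bigr),
\]
reducing the problem to computing a Laplace transform of a shot-noise interference field.

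The next step is to split the interference $I_0(\mathcal{F}_m)$ in \eqref{Eq:I:Cond} into two independent families of PPPs: the tiers $\{\Phi_{mj}\}$ of BSs that carry $\mathcal{F}_m$ (intensity $p_{mj}\lambda_j$) and the tiers $\{\Phi_{mj}^c\}$ that do not (intensity $(1-p_{mj})\lambda_j$). The content-centric cell association rule forces every interferer in $\Phi_{mj}$ to satisfy $P_j\|X\|^{-\alpha}\leq P_k r^{-\alpha}$, i.e.\ $\|X\|\geq r(P_j/P_k)^{1/\alpha}$, while the BSs in $\Phi_{mj}^c$ are unconstrained since they never enter the association. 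Applying the PPP Laplace functional to each piece gives
\[
-\ln\mathcal{L}_{I_0(\mathcal{F}_m)}(s)=2\pi\sum_{j=1}^{K}\!\Bigl[p_{mj}\lambda_j\!\!\int_{r(P_j/P_k)^{1/\alpha}}^{\infty}\!\!\!\!+ (1-p_{mj})\lambda_j\!\!\int_{0}^{\infty}\Bigr]\!\Bigl(1-\tfrac{1}{1+sP_j x^{-\alpha}}\Bigr)x\,dx.
\]
With $s=\beta_k r^\alpha/P_k$ and the change of variable $t=x(P_k/P_j)^{1/\alpha}/r$, the first integral becomes $r^2(P_j/P_k)^\delta\int_1^\infty(1-\tfrac{1}{1+\beta_k t^{-\alpha}})t\,dt$, which equals $r^2(P_j/P_k)^\delta Q(\beta_k)/2$ by the standard ${}_2F_1$ identity giving \eqref{Eq:Q_Fun}; the second integral, over $(0,\infty)$, evaluates via the Mellin-type identity $\int_0^\infty u^{-\delta}/(1+u)\,du = \pi\csc(\delta\pi)$ to $r^2(P_j/P_k)^\delta V(\beta_k)/2$ with $V$ as in \eqref{Eq:V:Fun}. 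Combining, the Laplace exponent is linear in $r^2$:
\[
-\ln\mathcal{L}_{I_0(\mathcal{F}_m)}\!\bigl(\beta_k r^\alpha/P_k\bigr)=\pi r^2\sum_{j=1}^{K}\lambda_j\bigl(P_j/P_k\bigr)^{\!\delta}\bigl[p_{mj}Q(\beta_k)+(1-p_{mj})V(\beta_k)\bigr].
\]

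The de-conditioning step is then immediate: multiplying the exponential above by $f_R(r)$ from Lemma~\ref{Lemma: Serving distance} produces a single Gaussian-type integral of the form $\int_0^\infty 2\pi r e^{-\pi r^2 M}\,dr = 1/M$, where
\[
M \;=\; \sum_{j=1}^{K}p_{mj}\lambda_j\bigl(P_j/P_k\bigr)^{\!\delta}\bigl[1+Q(\beta_k)-V(\beta_k)\bigr]+V(\beta_k)\sum_{j=1}^{K}\lambda_j\bigl(P_j/P_k\bigr)^{\!\delta}
\]
(note that the BS-density term in $f_R$ combines with $Q-V$ in the Laplace exponent to form exactly $W(\beta_k)=1+Q(\beta_k)-V(\beta_k)$). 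Hence $\mathcal{P}_{mk}=\frac{p_{mk}\lambda_k}{A_{mk}}\cdot\frac{1}{M}$, and substituting into $\mathcal{P}_m=\sum_k A_{mk}\mathcal{P}_{mk}$ and multiplying numerator and denominator by $P_k^\delta$ to absorb the ratios $(P_j/P_k)^\delta$ yields \eqref{Eq:ConHitProb}.

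The main obstacle is the bookkeeping around the exclusion regions: one must be careful that the serving BS constraint implies a radius-$r(P_j/P_k)^{1/\alpha}$ guard zone in every tier $j$ with the file (including $j=k$ where the serving BS itself is the nearest point of $\Phi_{mk}$), but \emph{no} guard zone in the tiers $\Phi_{mj}^c$. Once this geometry is set up correctly, the computation reduces to two well-known PPP Laplace integrals, and the linear-in-$r^2$ structure of the Laplace exponent conspires with the Rayleigh form of $f_R(r)$ to collapse everything to the clean closed form in \eqref{Eq:ConHitProb}.
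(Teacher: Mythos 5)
Your proposal is correct and follows essentially the same route as the paper's proof: condition on the serving BS at distance $r$, use the Rayleigh-fading identity to express coverage as the Laplace transform of $I_0(\mathcal{F}_m)$, split the interference into the file-carrying processes $\Phi_{mj}$ (with guard zones of radius $r(P_j/P_k)^{1/\alpha}$) and the processes $\Phi_{mj}^c$ (with no guard zone), evaluate the two resulting PPP integrals to produce $Q(\beta_k)$ and $V(\beta_k)$, and de-condition against $f_R(r)$ from Lemma~\ref{Lemma: Serving distance} before assembling $\mathcal{P}_m=\sum_k A_{mk}\mathcal{P}_{mk}$. The bookkeeping you flag (guard zones only for tiers carrying the file, and the recombination of the density term in $f_R$ with $Q-V$ into $W$) is exactly the step the paper's derivation hinges on, so nothing is missing.
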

\begin{proof}
See Appendix \ref{Pf:ConditionalHitProb}.
\end{proof}
Using Lemma~\ref{Lem:ConditionalHitProb} and the definition of hit probability in
\eqref{Eq:HitProb_Def}, we obtain the first main result of this paper.
\begin{theorem}[Hit Probability] \label{Thm:HitProb}
\emph{
The hit probability for the  cache-enabled HCNs is given as
\begin{equation}\nn
\mathcal{P} =\sum_{m=1}^{M} q_{m} \sum\limits_{k=1}^{K} \frac{p_{mk} \lambda_k
P_k^{\delta}}{W(\beta_k)\sum\limits_{i=1}^{K}p_{mi} \lambda_i P_i^{\delta}+V(\beta_k)
\sum\limits_{i=1}^{K}\lambda_i P_i^{\delta}},
\end{equation}
where functions $V(\cdot)$ and $W(\cdot)$ are given in \eqref{Eq:V:Fun} and \eqref{Eq:T:Fun}, respectively. 
}
\end{theorem}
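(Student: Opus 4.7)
The plan is to combine the total-probability decomposition over requested files with the already-established Lemma~\ref{Lem:ConditionalHitProb}. Since the typical user requests file $\mathcal{F}_m$ with probability $q_m$ and these request events are mutually exclusive and exhaustive, the law of total probability delivers the decomposition $\mathcal{P}=\sum_{m=1}^{M}q_m\mathcal{P}_m$ already recorded in \eqref{Eq:HitProb_Def}. Inserting the closed-form conditional hit probability from Lemma~\ref{Lem:ConditionalHitProb} into this sum yields the stated formula; no additional calculation is required beyond substitution.

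Given how mechanical this reduction is, the real work is bundled inside Lemmas~\ref{Lemma:Association probability}--\ref{Lem:ConditionalHitProb}. Were I to redo that work rather than invoke it, my approach would be as follows. First, partition the per-file success event by the content-centric tier index serving the user, obtaining $\mathcal{P}_m=\sum_{k=1}^{K} A_{mk}\mathcal{P}_{mk}$ as in \eqref{Eq:HP_File_def}. Next, for each pair $(m,k)$, condition on the serving distance $R$ and use the Rayleigh-fading assumption to turn the SIR-success event into an exponential tail, which factorizes into Laplace transforms of the per-tier interference contributions. The interference in \eqref{Eq:I:Cond} splits cleanly into the PPP $\Phi_{mk}\setminus X_k$ (co-tier BSs carrying $\mathcal{F}_m$, subject to a nearest-BS exclusion region) and the independent PPP $\Phi_{mk}^{c}$ (BSs without the file, with no exclusion); each Laplace transform is evaluated in closed form via the PGFL of a PPP, which is where the functions $Q(\beta_k)$ and $V(\beta_k)$ emerge. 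Integrating the result against the serving-distance PDF from Lemma~\ref{Lemma: Serving distance} then cancels the $A_{mk}$ factor in the denominator and collapses the expression into the ratio form of \eqref{Eq:ConHitProb}.

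The one point worth flagging when passing from Lemma~\ref{Lem:ConditionalHitProb} to Theorem~\ref{Thm:HitProb} is that no cross-file independence is needed: the outer sum over $m$ is a weighted expectation over the random requested-file index and is linear regardless of any correlations among $\{\Phi_{mk}\}_{m}$ within a tier. Consequently the theorem itself presents no obstacle; the main effort, which was already discharged by Lemma~\ref{Lem:ConditionalHitProb}, is the Laplace-transform bookkeeping that absorbs the nearest-BS exclusion contribution into the combination $W(\beta_k)=1+Q(\beta_k)-V(\beta_k)$.
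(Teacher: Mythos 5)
Your proposal matches the paper's proof exactly: Theorem~\ref{Thm:HitProb} is obtained by substituting the conditional hit probability of Lemma~\ref{Lem:ConditionalHitProb} into the total-probability decomposition $\mathcal{P}=\sum_{m=1}^{M}q_m\mathcal{P}_m$ of \eqref{Eq:HitProb_Def}, and your sketch of the underlying Laplace-transform/PGFL computation mirrors the paper's Appendix proof of that lemma, including the split of the interference into $\Phi_{mi}\setminus X_k$ (with exclusion region) and $\Phi_{mi}^{c}$ (without). No gaps; the argument is correct and essentially identical to the paper's.
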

Theorem \ref{Thm:HitProb} shows that the hit probability is determined by two sets of
network parameters: one set is related to the physical layer including the BS density
$\{\lambda_k\}$, transmit power $\{P_k\}$, and path-loss parameter $\delta$; the other set contains content-related parameters including the popularity measures $\{q_m\}$ and placement probabilities $\{p_{mk}\}$.

From Theorem \ref{Thm:HitProb}, we can directly obtain hit probabilities for two
special cases, namely, the single-tier HCNs and the multi-tier HCNs with uniform SIR
thresholds, as shown in the following two corollaries. 

\begin{corollary}[Hit Probability for Single-Tier HCNs] \label{CR:SpecialCase:SingleTier}
\emph{
Given $K=1$, the hit probability for cache-enabled HCNs is 
\begin{equation}\label{SCDP_SigleTier}
\mathcal{P}=\sum_{m=1}^{M} q_{m} \frac{p_m}{W(\beta) p_m + V(\beta)},
\end{equation}
where the functions $V(\cdot)$ and $W(\cdot)$ are given in \eqref{Eq:V:Fun} and \eqref{Eq:T:Fun}. 
}
\end{corollary}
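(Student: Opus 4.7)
The plan is to obtain Corollary \ref{CR:SpecialCase:SingleTier} by specializing Theorem \ref{Thm:HitProb} to $K=1$. Since Theorem \ref{Thm:HitProb} already provides the hit probability in closed form for an arbitrary number of tiers, no new stochastic-geometry argument is required; the work is entirely algebraic simplification of the general expression in the single-tier regime.

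Concretely, I would proceed in three short steps. First, substitute $K=1$ into the formula of Theorem \ref{Thm:HitProb}, so that the outer sum over $k$ collapses to a single term and both inner sums over $i$ reduce to a single term indexed by $i=1$. This turns the generic summand into
\begin{equation}
\frac{p_{m1}\lambda_{1}P_{1}^{\delta}}{W(\beta_{1})\,p_{m1}\lambda_{1}P_{1}^{\delta}+V(\beta_{1})\,\lambda_{1}P_{1}^{\delta}}. \nonumber
\end{equation}
Second, factor out the common quantity $\lambda_{1}P_{1}^{\delta}$ from both the numerator and the denominator; this is legitimate since $\lambda_1,P_1>0$, and it makes all physical-layer parameters besides the SIR threshold drop out. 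Third, adopt the simplified single-tier notation $p_{m}\equiv p_{m1}$ and $\beta\equiv\beta_{1}$ used in the corollary, then apply the definition of the (unconditional) hit probability in \eqref{Eq:HitProb_Def}, namely $\mathcal{P}=\sum_{m=1}^{M}q_{m}\mathcal{P}_{m}$, to recover exactly \eqref{SCDP_SigleTier}.

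There is no genuine obstacle in this derivation; the only point worth flagging is the qualitative observation that $\lambda_{1}$ and $P_{1}$ cancel identically. This is not accidental: in a single-tier, interference-limited homogeneous PPP network with Rayleigh fading, the received SIR distribution is invariant under common scaling of BS density and transmit power, so the hit probability depends on the physical layer only through the path-loss exponent (via $\delta$) and the SIR threshold $\beta$ (via $V$ and $W$). Documenting this cancellation explicitly in the write-up is the most useful thing to do, since it justifies why the single-tier expression is so much simpler than the multi-tier one and motivates the search for analogous structural simplifications in the uniform-threshold multi-tier case treated next.
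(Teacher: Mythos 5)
Your proposal is correct and matches the paper's own (implicit) derivation: the corollary is obtained by setting $K=1$ in Theorem~\ref{Thm:HitProb}, whereupon the common factor $\lambda_{1}P_{1}^{\delta}$ cancels from numerator and denominator, leaving \eqref{SCDP_SigleTier}. The extra remark on the density/power invariance is consistent with the paper's own discussion following Corollary~\ref{CR:SpecialCase:SingleTier}.
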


Corollary \ref{CR:SpecialCase:SingleTier} shows that the hit probability for single-tier
cache-enabled networks is independent with BS density and transmit power, which is a well-known characteristic of interference-limited cellular networks. On the other hand, it is found to be monotone increasing with growing placement probabilities as the spatial content density increases.

\begin{corollary}[Hit Probability for Multiple-tier HCNs with Uniform SIR Thresholds] \label{CR:SpecialCase:IdenticalSIR}
\emph{
Given $\beta_{k}=\beta\ \forall k$, the hit probability for the cache-enabled HCNs is given as
\begin{align}
&\mathcal{P}=\sum_{m=1}^{M} q_{m}
\frac{\sum_{k=1}^{K}p_{mk}\lambda_{k}P_{k}^{\delta}}{W(\beta)
\sum_{k=1}^{K}p_{mk}\lambda_{k}P_{k}^{\delta} + V(\beta)
\sum_{k=1}^{K}\lambda_{k}P_{k}^{\delta} },
\end{align}
where functions $V(\cdot)$ and $W(\cdot)$ are given in \eqref{Eq:V:Fun} and \eqref{Eq:T:Fun}, respectively. 
}
\end{corollary}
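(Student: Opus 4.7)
The plan is to derive Corollary 2 as a direct specialization of Theorem~\ref{Thm:HitProb}, exploiting the fact that under the uniform SIR hypothesis $\beta_k=\beta$ for all $k$, the functions $V(\beta_k)$ and $W(\beta_k)$ defined in \eqref{Eq:V:Fun} and \eqref{Eq:T:Fun} no longer depend on the tier index. The whole proof is an algebraic simplification; there is no technical obstacle beyond bookkeeping of the sums.

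First I would write out the hit probability from Theorem~\ref{Thm:HitProb} and substitute $W(\beta_k)=W(\beta)$ and $V(\beta_k)=V(\beta)$. Once this substitution is made, the denominator of the generic summand,
\begin{equation*}
W(\beta)\sum_{i=1}^{K} p_{mi}\lambda_i P_i^{\delta}+V(\beta)\sum_{i=1}^{K}\lambda_i P_i^{\delta},
\end{equation*}
is independent of the outer summation variable $k$. This is the key observation that makes the simplification work.

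Next I would pull this common denominator outside the inner sum over $k$. The remaining numerator is simply $\sum_{k=1}^{K} p_{mk}\lambda_k P_k^{\delta}$, which yields precisely the expression stated in the corollary. The outer sum over $m$ weighted by the popularities $q_m$ is untouched throughout.

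Finally, I would note that the resulting expression makes transparent a useful structural property that will be exploited later: for each file $\mathcal{F}_m$, the placement probabilities $\{p_{mk}\}_{k=1}^{K}$ enter the hit probability only through the single weighted aggregate $\sum_{k=1}^{K} p_{mk}\lambda_k P_k^{\delta}$. This collapse from $K$ per-file variables into one effective variable is what will later make the tier-level content placement problem tractable under uniform SIR thresholds, and it is worth flagging at the end of the proof even though no additional work is needed to establish the statement itself.
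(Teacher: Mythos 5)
Your proposal is correct and matches the paper's (implicit) argument: the corollary is obtained directly from Theorem~\ref{Thm:HitProb} by setting $\beta_k=\beta$, observing that the denominator no longer depends on the outer index $k$, and summing the numerators. The closing remark about the placement probabilities entering only through the aggregate $\sum_{k} p_{mk}\lambda_k P_k^{\delta}$ is a nice anticipation of the role this quantity plays later (as $g_m$ in Problem P3), though it is not needed for the statement itself.
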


\begin{remark}[Effects of Large Cache Capacities] \label{RM: IdenticalSIR}
\emph{Corollary \ref{CR:SpecialCase:IdenticalSIR} shows that the hit probability is a
monotone increasing function of the placement probabilities $\{p_{mk}\}$ and converges to a
constant, which is independent of the BS densities and transmission powers, as all the placement probabilities become
ones, corresponding to the case of large cache capacities. At this limit, the cache-enable
HCN is effectively the same as a traditional interference-limited HCN for general data
services and the said independence is due to a uniform SIR threshold and is well known in the literature (see e.g., \cite{HCN-K-Tier}).
}
\end{remark}

\section{Optimal Tier-Level Content Placement}

In this section, we maximize the hit probability derived for the cache-enabled HCNs in the
preceding section over the placement probabilities. 

\subsection{Problem Formulation}
The TLCP problem consists of finding the placement matrix $\bP$ in
\eqref{Eq:Place:Matrix} that maximizes the hit probability for HCNs as given in Theorem~\ref{Thm:HitProb}. Mathematically, the optimization problem can be formulated as follows: 
\begin{equation}\tag{$\textbf{P0}$}\label{Opt_Org}
\begin{aligned}
\mathop{\max}\limits_{\bf{P}}   & \sum_{m=1}^{M} q_{m} \sum_{k=1}^{K} \frac{p_{mk} \lambda_k P_k^{\delta}}{W(\beta_k)\sum_{i=1}^{K}p_{mi} \lambda_i P_i^{\delta}+V(\beta_k) \sum_{i=1}^{K}\lambda_i P_i^{\delta}} \\
\textmd{s.t.} &\sum_{m=1}^{M}p_{mk}\leq C_{k},   \forall k,  \\
& p_{mk}\in [0,1],  \forall m, k,
\end{aligned}
\end{equation}
where the first constraint from \eqref{Eq:Cap:Const} is based on the BS cache capacity  for each tier and the second constraint arises from the fact that  $p_{mk}$ is a probability. 

It is numerically difficult to directly solve Problem P0, since it has a structure of ``sum-of-ratios" with a non-convex nature and has been proved to be NP-complete. In order to provide useful insights and results for tackling the problem, the optimal content placement policies  are first analyzed for the special case of single-tier HCNs and then extended to multi-tier HCNs.

\subsection{Single-Tier HCNs}
For the current case with $K=1$, using Corollary~\ref{CR:SpecialCase:SingleTier},  Problem P0 is simplified as:
\begin{equation}\tag{$\textbf{P1}$}\label{Opt_SingleTier}
\begin{aligned}
\mathop {\max}\limits_{\bf{p}} \; &{\sum_{m=1}^{M} q_{m} \frac{p_m}{W(\beta) p_m + V(\beta)} } \\
{\textmd{s.t.}}\;\;&{\sum_{m=1}^{M}p_{m}\leq C},  \\
& p_{m}\in [0,1],  \forall m,
\end{aligned}
\end{equation}
where $p_m$ denotes the placement probability for file $\mathcal{F}_m$, the vector $\mathbf{p}=(p_1,p_2,\cdots,p_m)$, $C$ denotes the cache capacity for single-tier HCNs. 

It should be noted that Problem P1 has the same structure as that for the asymptotic (high SNR and high user density) case in \cite{Cui_SingleTierCacheNetwork} where the single-tier cache-enabled BSs are distributed as a PPP and random combination of files are cached in each BS with probability $p_m$. Nevertheless, it is still valuable to discuss this special case since it provides useful insights for solving the complex problem for multi-tier HCNs. Therefore, this paper focuses on these insights.

Problem P1 is convex since the objective function is convex and the constraints are  linear and can thus be solved using the Lagrange method. The  Lagrangian function can be written as
\begin{equation}\label{Lagrangian_SingleTier}
L(\mathbf{p},u)\!=\!\sum_{m=1}^{M} q_{m} \frac{p_m}{W(\beta) p_m\!+\!V(\beta)} + u\!\left(C\!-\!\sum_{m=1}^{M}p_{m}\!\right),
\end{equation}
{where $u \geq 0$ denotes the Lagrangian multiplier. Using the Karush-Kuhn-Tucker (KKT)
condition, setting the derivative of  $L$ in  \eqref{Lagrangian_SingleTier} to zero leads to
the optimal placement probabilities as shown in   Theorem \ref{OptimalCaching_SingleTier}
where the optimal Lagrange multiplier is denoted by $u^*$. Note that the capacity constraint is active at the optimal point, namely  $\sum_{m=1}^M p_m^*(u^*)=C$. This result comes from the fact that the objective function of Problem P1 is a monotone-increasing function of $\{p_m\}$.

\begin{theorem}[Optimal TLCP for Single-Tier HCNs]\label{OptimalCaching_SingleTier}
\emph{For the single-tier cache-enabled HCN, given the optimal Lagrangian multiplier  $u^*$,
the optimal content placement  probabilities, denoted by $\{p^*_m\}$, that solve Problem~P1  are given as
\begin{equation}
p_{m}^{*}(u^*)=\begin{cases}
1, & q_m \geq T_1,\\
\frac{\sqrt{V(\beta)}}{\sqrt{u^*}W(\beta)}\sqrt{q_m} - \frac{V(\beta)}{W(\beta)}, & T_0 < q_m < T_1, \\
0, & q_m \leq T_0,
\end{cases}
\end{equation}
where the thresholds  $T_1=\frac{u^* (W(\beta)+V(\beta))^2}{V(\beta)}$ and $T_0=u^* V(\beta)$,  and 
the optimal Lagrange multiplier $u^*$ satisfies the equality
\begin{equation}
\sum_{m=1}^M p_m^*(u^*)=C.
\end{equation}
}
\end{theorem}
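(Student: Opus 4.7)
The plan is to solve Problem~P1 by the Karush--Kuhn--Tucker (KKT) method, but with a careful treatment of the box constraints $p_m\in[0,1]$, which the theorem statement compresses into the three thresholded cases. My first step is to verify that P1 is a genuine convex program (concave maximization). Writing the per-file objective as $f(p)=p/(W(\beta)p+V(\beta))$, one easily checks $f'(p)=V(\beta)/(W(\beta)p+V(\beta))^2$ and $f''(p)=-2V(\beta)W(\beta)/(W(\beta)p+V(\beta))^3<0$ on $[0,1]$, so $f$ is strictly concave and hence the objective of P1 is strictly concave. Because the feasible set is a polytope and Slater's condition is trivially met, strong duality holds and the KKT conditions are necessary and sufficient for optimality.

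Next I would introduce multipliers $u\ge 0$ for the capacity constraint together with multipliers $\mu_m,\nu_m\ge 0$ for $p_m\ge 0$ and $p_m\le 1$, and write the stationarity condition
\begin{equation}\nn
\frac{q_m V(\beta)}{(W(\beta)p_m+V(\beta))^2} \;=\; u-\mu_m+\nu_m,\qquad m=1,\dots,M,
\end{equation}
together with complementary slackness $\mu_m p_m=0$ and $\nu_m(1-p_m)=0$. I would then analyze three mutually exclusive regimes. In the interior regime $p_m\in(0,1)$, both $\mu_m,\nu_m=0$; solving the stationarity equation for $p_m$ gives the middle case $p_m^*=\sqrt{V(\beta)q_m}/(\sqrt{u}\,W(\beta))-V(\beta)/W(\beta)$. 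Requiring $p_m^*>0$ yields $q_m>uV(\beta)\eqqcolon T_0$, and requiring $p_m^*<1$ yields $q_m<u(W(\beta)+V(\beta))^2/V(\beta)\eqqcolon T_1$. For $q_m\le T_0$, I would show that setting $p_m=0$ (so $\nu_m=0$, $\mu_m\ge 0$) is KKT-consistent since the stationarity equation then reduces to $q_m/V(\beta)=u-\mu_m\le u$; symmetrically, for $q_m\ge T_1$, the choice $p_m=1$ is KKT-consistent. This gives the three-case formula in the theorem statement.

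Finally I would pin down $u^*$ by showing that the capacity constraint $\sum_m p_m\le C$ must be active at the optimum. This follows because $f'(p)>0$ everywhere, so $u^*=0$ would force all $p_m^*=1$ which violates $C_k\le M$ non-trivially (and, more directly, if the capacity constraint were slack one could raise any $p_m<1$ and strictly increase the objective). With $u^*>0$, complementary slackness yields $\sum_m p_m^*(u^*)=C$, which uniquely determines $u^*$ because the right side is continuous and monotone decreasing in $u$. The main technical obstacle I expect is not the calculus but the bookkeeping: making the clipping at $0$ and $1$ line up cleanly with the sign of $\mu_m-\nu_m$ at the boundary cases $q_m=T_0$ and $q_m=T_1$, and verifying that any tie can be broken without affecting the optimum; once the three KKT regimes are matched to the three cases of $q_m$ relative to $T_0,T_1$, the stated closed form follows directly.
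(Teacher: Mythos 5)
Your proposal is correct and follows essentially the same route as the paper, which also treats P1 as a convex program, forms the Lagrangian for the capacity constraint, and applies the KKT conditions to obtain the three-case thresholded form together with the active-constraint condition $\sum_m p_m^*(u^*)=C$. Your treatment is in fact more explicit than the paper's inline argument, since you carry the multipliers $\mu_m,\nu_m$ for the box constraints $p_m\in[0,1]$ and verify KKT consistency of the clipped cases $q_m\le T_0$ and $q_m\ge T_1$, which the paper handles only implicitly.
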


In addition, the optimal Lagrangian multiplier $u^*$ in Theorem~\ref{OptimalCaching_SingleTier}  can be found via a simple bisection search. Let $D$ be the number of iterations needed to find the optimal Lagrange multiplier $u^*$. Clearly, the computational complexity of TLCP for single-tier HCNs is $O(DM)$. The corresponding algorithm is shown in Algorithm~\ref{tab:SingleTier}.

\begin{algorithm}[t!]
  \centering
  \caption{Computing the Optimal Lagrangian Multiplier $u^*$ by a Bi-section Search.}
  \label{tab:SingleTier}
  \begin{tabular}{l}
    initialize $u^0 \in [u^{(0,\min)}, u^{(0,\max)}]=[\frac{q_M V}{(W+V)^2}, \frac{q_1}{V}]$\\
    repeat\\
    \;\;\;\; $u^{(\ell+1)}=u^{(\ell, \min)}+\frac{u^{(\ell, \max)}-u^{(\ell, \min)}}{2}$\\
    \;\;\;\; if $\sum_{m=1}^{M} p_m(u^{(\ell+1)})<C$,  \;\;\;\; $u^{(\ell+1,\max)}=u^{(\ell+1)}$\\
    \;\;\;\; else \;\;\;\; $u^{(\ell+1,\min)}=u^{(\ell+1)}$\\
    until $u$ converges\\
  \end{tabular}
\end{algorithm}

\begin{remark}[Offset-Popularity Proportional Caching Structure]\label{FundamentalPolicyStructure}
\emph{As illustrated in Fig. \ref{Fig:OPP Structure}, the optimal content placement in
Theorem \ref{OptimalCaching_SingleTier} has the mentioned  \emph{offset-popularity proportional}
(OPP) structure described as follows. Specifically, if the popularity measure of a particular file is
within the range $[T_0, T_1]$, the optimal placement probability, $p_m$, monotonically increases with the \emph{square root} of the popularity measure, i.e., $\sqrt{q_m}$. Otherwise, the probability is either $1$ or $0$ depending on whether the measure is above or below the range. Furthermore, the probability is offset by a function $V(\beta)/W(\beta)$ of the SIR threshold and scaled by a function of both the threshold and the cache capacity $C$. }
\end{remark}

\begin{figure}
\centering \includegraphics[width=8cm]{./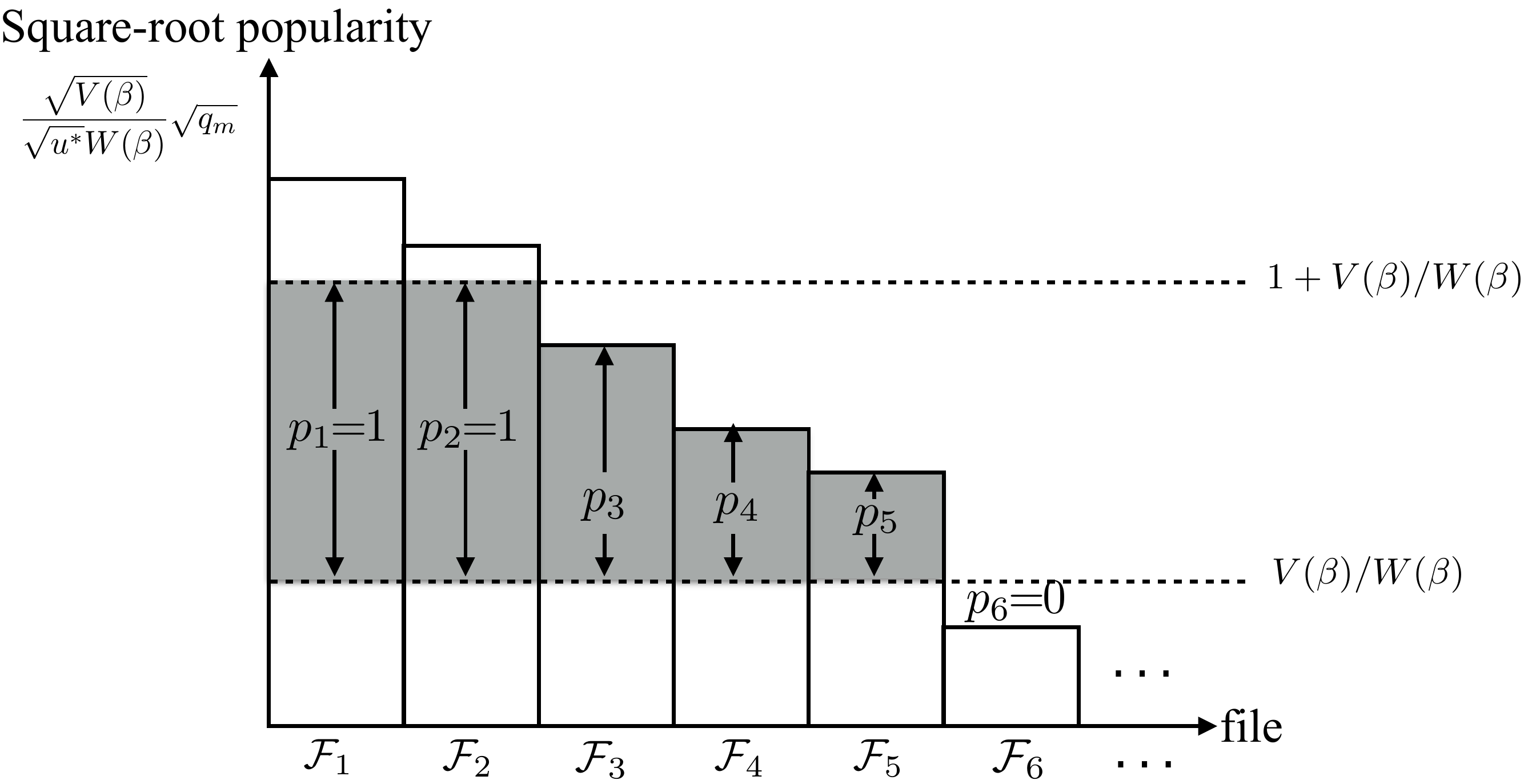}
\caption{The structure of the optimal content-placement policy for single-tier HCNs.}\label{Fig:OPP Structure}
\end{figure}

\begin{figure}
\centering \includegraphics[width=8cm]{./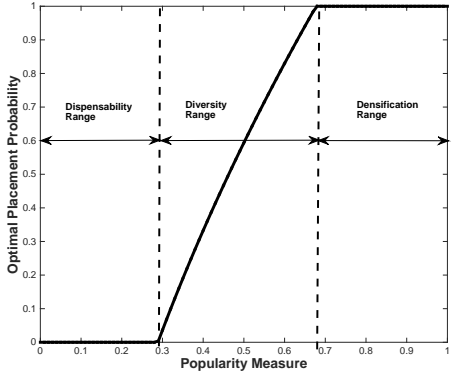}
\caption{The  effects of content popularity on the optimal placement probability.}\label{Fig:ContentMUL_DIV}
\end{figure}

\begin{remark}[Effects of Content Popularity on Optimal Placement Probability] \label{MultiplexityDiversity}
\emph{The result in Theorem~\ref{OptimalCaching_SingleTier} shows that the optimal content placement probability is decided by its popularity measure. In particular, content files can be
separated by defining three ranges of popularity measure, corresponding to placement
probabilities of   $0$, $(0, 1)$ and $1$ as illustrated in Fig.~\ref{Fig:ContentMUL_DIV},
called the \emph{dispensability}, \emph{diversity}, and \emph{densification} ranges, respectively. In the dispensability range, the files are highly unpopular and do not need to be cached in the network. In contrast, the files in the densification range are highly popular such that their spatial density should be maximized by caching the files at every BS. Last, files in the diversity range have moderate popularity and it is desirable to have all of them available in the network, corresponding to enhancing spatial content diversity. As a result,  they are cached at different fractions of BSs. 
}
\end{remark}

\begin{remark}[Effects of SIR threshold]\label{OptCaching_SingleTier_SIR}
\emph{ The SIR threshold $\beta$ affects both  the popularity thresholds ($T_0$ and $T_1$) in the  optimal placement policy (see Theorem~\ref{OptimalCaching_SingleTier}). It is observed from numerical results that both thresholds are monotone increasing functions of $\beta$. }
\end{remark}

\begin{remark}[Effects of Lagrangian multiplier $u^*$]\label{Multiplier_SingleTier}
\emph{ The value of Lagrangian multiplier affects the popularity thresholds $T_0$ and $T_1$, and is determined by the capacity constraint equality, i.e., $\sum_{m=1}^M p_m^*(u^*)=C$. In the case that the requested cache unit is larger than the cache capacity, i.e., $\sum_{m=1}^M p_m(u)>C$, the Lagrangian multiplier $u$ should be increased to enlarge the popularity thresholds and thus decrease the placement probabilities, and vice versa.}
\end{remark}

Problem P1 is considered purposely to help solve the general version and also for clarity in exposition. In particular, the insight from solving P1 is exploited to solve P2 in closed form given uniform SIR thresholds and develop a sub-optimal scheme for the case with non-uniform thresholds.

\subsection{Multi-tier HCNs with Uniform SIR Thresholds}
Consider a  multi-tier HCN with uniform SIR thresholds for all tiers.  Based on the hit probability in Corollary \ref{CR:SpecialCase:IdenticalSIR}, Problem P0 for the current case is given as:
\begin{equation}\tag{$\textbf{P2}$}
\begin{aligned}
\mathop {\max }\limits_{\bf{p}} \; &{\sum_{m=1}^{M} q_{m} \frac{\sum_{k=1}^{K}p_{mk}z_k}{W \sum_{k=1}^{K}p_{mk}z_k + V^{\prime} }} \\
{\textmd{s.t.}}\;\;&{\sum_{m=1}^{M}p_{mk}\leq C_k}, \forall k,  \\
& p_{mk}\in [0,1],  \forall m,k, 
\end{aligned}
\end{equation}
where $z_k$ and $V'$ are  constants defined as  $z_k=\lambda_k P_k^{\delta}$ and $V^{\prime} = V \sum_{k=1}^{K} z_k$.
One can see that the problem is convex and can thus be solved numerically using a standard convex-optimization solver. However, the numerical approach may have high complexity  if the content database is large and further yields little insight into the optimal policy structure. Thus, in the remainder of this section,  a simple algorithm is developed for sequential computation of the optimal policy, which also reveals some  properties of the policy structure. 

To this end,  define the tier-wise weighted sum of placement probabilities for each file as 
\begin{equation}
g_{m}=\sum_{k=1}^{K} p_{mk} z_k, \qquad m = 1, 2, \cdots, M. 
\end{equation}
Using this definition, a relaxed version of  Problem P2 can be rewritten as follows: 
\begin{equation}\tag{$\textbf{P3}$}\label{Opt_MultiTier_SpecialCase}
\begin{aligned}
\mathop {\max }\limits_{\{g_m\}} \; &{\sum_{m=1}^{M}  \frac{q_{m} g_m}{W g_m + V^{\prime} }} \\
{\textmd{s.t.}}\;\;&\sum_m g_m \leq \sum_k C_k z_k, \forall k,  \\
& 0\leq g_m \leq \sum_k z_k,  \forall m. 
\end{aligned}
\end{equation}
Comparing Problem P3 with P1 for the single-tier HCNs, one can see the two problems have identical forms. Thus, this allows Problem P3 to be solved following a similar procedure as P1, yielding the following proposition. 
\begin{proposition}[Weighted Sum of Optimal Placement Probabilities]\label{Prop:SumWeighted}
\emph{The weighted sum of the optimal placement probabilities for multi-tier HCNs with
uniform SIR thresholds, denoted by $g_m^*$, is given as:
\begin{equation}\label{Eq:WeightedSum}
g_{m}^{*}(\eta^*)\!=\!\begin{cases}
\sum_{k=1}^{K} \lambda_k P_k^{\delta}, & \text{if} \;q_m \geq T_1^{\prime},\\
(\sqrt{q_m V^{\prime}/\eta^*}-V^{\prime})/W, & \text{if}\; T_0^{\prime} < q_m <
T_1^{\prime},\!\! \\
0, & \text{if} \; q_m \leq T_0^{\prime},
\end{cases}
\end{equation}
where $T'_1=\frac{\eta^* (W^{\prime}+V^{\prime})^2}{V^{\prime}}$, $T'_0=\eta^* V^{\prime}$, $W^{\prime}=W \sum_{k=1}^K z_i$ and the optimal Lagrange multiplier $\eta^*$ satisfies the following equality
\begin{equation}\label{OptMultiplier}\nn
\sum_{m=1}^{M} g_{m}^{*}(\eta^*)=\sum_{k=1}^{K}C_k z_k.
\end{equation}
}
\end{proposition}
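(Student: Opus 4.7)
The plan is to exploit the fact that the objective of Problem P2 depends on the placement matrix $\bP$ only through the tier-weighted sums $g_m=\sum_{k=1}^K p_{mk}z_k$, so one can first optimize directly over $\{g_m\}$ and recover $\bP$ afterwards. The box constraint $p_{mk}\in[0,1]$ pushes down to $0\le g_m\le\sum_k z_k$, and weighting the cache-capacity constraint $\sum_m p_{mk}\le C_k$ by $z_k$ and summing over $k$ gives $\sum_m g_m\le\sum_k C_k z_k$. These are exactly the constraints of Problem P3, and the objective has been rewritten as a separable sum of single-variable functions of $g_m$.

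Once P3 has been set up, I would observe that it has exactly the same algebraic form as Problem P1 in the single-tier case, with $g_m$, $V'$, and $W\sum_k z_k=W'$ playing the roles of $p_m$, $V(\beta)$, and $W(\beta)$. Each summand $q_m g_m/(Wg_m+V')$ is concave in $g_m$, so P3 is a concave maximization over a polytope. Introducing a single Lagrange multiplier $\eta\ge 0$ for the capacity constraint and writing
\[
L(\{g_m\},\eta)=\sum_{m=1}^M\frac{q_m g_m}{Wg_m+V'}+\eta\Bigl(\sum_k C_k z_k-\sum_m g_m\Bigr),
\]
the stationarity condition $\partial L/\partial g_m=0$ yields $q_m V'/(Wg_m+V')^2=\eta$, and solving the resulting quadratic gives the unconstrained interior optimum $g_m=(\sqrt{q_m V'/\eta}-V')/W$. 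Projecting onto the box $[0,\sum_k z_k]$ produces the three cases of \eqref{Eq:WeightedSum}: the lower clip activates when $\sqrt{q_m V'/\eta}\le V'$, i.e., $q_m\le\eta V'=T_0'$; the upper clip activates when $\sqrt{q_m V'/\eta}\ge W\sum_k z_k+V'=W'+V'$, i.e., $q_m\ge\eta(W'+V')^2/V'=T_1'$. Since each term of the objective is strictly increasing in $g_m$, the capacity constraint binds at the optimum, which implicitly defines $\eta^*$ through $\sum_m g_m^*(\eta^*)=\sum_k C_k z_k$.

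The main obstacle, in my view, is not the KKT manipulation itself—which is a direct transcription of the argument for Theorem~\ref{OptimalCaching_SingleTier} with relabeled constants—but rather the implicit equivalence between P3 and P2. One must check that the optimal aggregate vector $\{g_m^*\}$ produced by the relaxation can actually be realized by some feasible placement matrix $\bP^*$ that simultaneously respects both $p_{mk}\in[0,1]$ and $\sum_m p_{mk}\le C_k$ tier by tier. This is a transportation-type feasibility question: the projection of P2's feasible polytope onto the $g_m$-coordinates must contain $\{g_m^*\}$. I would expect this to follow from an explicit constructive argument distributing each $g_m^*$ across tiers subject to the per-tier budget constraints, most naturally handled by the sequential algorithm the paper develops immediately afterward, so that the proposition itself stands as a clean characterization of the optimal tier-weighted aggregates independent of the specific realization scheme.
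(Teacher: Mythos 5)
Your proposal follows essentially the same route as the paper: the paper likewise passes to the relaxed problem P3, observes that it has the identical form to the single-tier Problem P1, and reads off the three-case solution from the same stationarity condition $q_m V^{\prime}/(W g_m + V^{\prime})^2 = \eta$ together with the capacity constraint binding by monotonicity. The realizability gap you flag at the end is genuine but is handled exactly where you anticipate: Proposition~\ref{Prop_WeightedSum} (via a direct KKT analysis of P2 itself) and the sequential construction of Theorem~\ref{Tm:OptimalSetting} show that the relaxed optimum $\{g_m^*\}$ is attained by a feasible placement matrix.
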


The value of $\eta^*$ can be found using the bisection search in Algorithm~1. Then the optimal values for the weighted sum $\{g_{m}^{*}\}$ can be computed using Proposition~\ref{Prop:SumWeighted}.

Problem P3 is the relaxed version of P2 since  the feasible region of P3 is larger than that of P2. Let $\{p^*_{mk}\}$ denote the optimal placement probabilities solving Problem P2 and $\{g_m^*\}$ the weighted sums solving Problem P3. The following proposition shows that the relaxation does not compromise the optimality of the solution.

\begin{proposition}\label{Prop_WeightedSum}
\emph{The solution of Problem P3 solves P2 in the sense that  $\sum_{k=1}^{K} p_{mk}^*
z_k=g_{m}^*,  m = 1, 2, \ldots, M$. 
}
\end{proposition}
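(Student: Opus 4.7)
The plan is to establish $\max\textrm{P2}=\max\textrm{P3}$ by a sandwich argument, and then to identify the P3 optimizer $\{g_m^*\}$ from Proposition~\ref{Prop:SumWeighted} with the weighted sums of some P2-feasible matrix. The starting observation is that the objective of P2 depends on $\{p_{mk}\}$ only through the weighted sums $g_m := \sum_{k} p_{mk} z_k$, so after this substitution the two objective functions coincide and only the two feasible sets need to be compared.

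The easy inclusion $\max\textrm{P2}\leq\max\textrm{P3}$ is immediate: every P2-feasible $\{p_{mk}\}$ induces a P3-feasible $\{g_m\}$, since the entrywise constraints $p_{mk}\in[0,1]$ give $0\leq g_m\leq \sum_k z_k$, and summing the per-tier capacity constraints yields $\sum_m g_m = \sum_k z_k \sum_m p_{mk} \leq \sum_k C_k z_k$.

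For the reverse direction, I would take the P3 optimizer $\{g_m^*\}$ and exhibit a P2-feasible $\{p_{mk}^*\}$ with $\sum_k p_{mk}^* z_k = g_m^*$ for every $m$. Because the P3 objective is strictly increasing in each $g_m$, complementary slackness at the optimum gives the mass-balance identity $\sum_m g_m^* = \sum_k C_k z_k$. Writing $f_{mk}=p_{mk} z_k$, the construction reduces to a capacitated transportation problem: find nonnegative $F=[f_{mk}]$ with row sums $\sum_k f_{mk}=g_m^*$, column sums $\sum_m f_{mk}=C_k z_k$, and entrywise caps $f_{mk}\leq z_k$. The plan is to construct $F$ by processing the files in decreasing order of $g_m^*$: those with $g_m^*=\sum_k z_k$ force $p_{mk}=1$ in every tier, those with $g_m^*=0$ are set to zero, and the intermediate files are water-filled across tiers against the residual capacities $C_k-\sum_{m'\in\textrm{processed}}p_{m'k}$.

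The main obstacle will be showing that this greedy/water-filling step never prematurely exhausts any tier's residual capacity, i.e.\ that the reduced transportation subproblem is feasible after each file is processed. The correct lever is to show that the mass-balance identity $\sum_m g_m^*=\sum_k C_k z_k$ together with the row caps $g_m^*\leq \sum_k z_k$ are preserved, in the appropriate residual form, at each inductive step; equivalently, one can invoke a Hall-type/max-flow feasibility certificate for capacitated transportation. Once such a $\{p_{mk}^*\}$ is produced, its P2 objective value equals the P3 optimum, which combined with the earlier inclusion forces $\max\textrm{P2}=\max\textrm{P3}$ and yields $\sum_k p_{mk}^* z_k = g_m^*$ by construction, completing the proof.
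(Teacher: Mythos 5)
Your reduction of the hard direction to a capacitated transportation problem --- row sums $g_m^*$, column sums $C_kz_k$, entrywise caps $z_k$ --- is the right way to make the claim precise, but the feasibility certificate you defer to is exactly where the argument breaks, and it does not follow from the ingredients you list. The mass balance $\sum_m g_m^*=\sum_k C_kz_k$ and the row caps $0\le g_m^*\le\sum_k z_k$ do \emph{not} imply the Gale--Hoffman conditions, which here read: for every subset $T$ of tiers, $\sum_{m=1}^{M}\max\bigl(g_m^*-\sum_{k\notin T}z_k,\,0\bigr)\le\sum_{k\in T}C_kz_k$. This can fail for data satisfying all of the paper's hypotheses. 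Take $K=2$, $z_1=10$, $z_2=1$, $C_1=1$, $C_2=3$, $M=4$, and popularities for which the water-filling formula of Proposition~\ref{Prop:SumWeighted} returns $g^*=(11,\,1.5,\,0.5,\,0)$; this is a consistent output ($\sum_m g_m^*=13=\sum_kC_kz_k$ and each $g_m^*\le\sum_kz_k=11$), yet with $T=\{1\}$ the condition reads $10+0.5=10.5\le C_1z_1=10$ and is violated. Structurally: $g_1^*=\sum_kz_k$ forces $p_{11}=p_{12}=1$, which exhausts $C_1=1$, so every other file is confined to tier $2$ and can reach at most $g_m=z_2=1<1.5=g_2^*$. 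No P2-feasible matrix realizes these weighted sums, so your sandwich cannot be closed without further assumptions on $\{z_k,C_k\}$ and the resulting $\{g_m^*\}$.

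For comparison, the paper does not go through the relaxation at all: it applies KKT directly to P2, notes that the interior stationarity condition $q_mV'z_k/(V'+W\sum_jp_{mj}z_j)^2=u_k$ would have to hold for every tier $k$ simultaneously, concludes that $u_k^*/z_k$ equals a tier-independent constant $\eta^*$, and thereby identifies $\sum_kp_{mk}^*z_k$ with the expression for $g_m^*$. That step tacitly assumes that whenever $q_m$ falls in the middle popularity range the optimal $p_{mk}$ is interior in \emph{every} tier, which is precisely the feasibility question your construction surfaces. So your route is the more transparent one, but as written it has a genuine gap at the feasibility step, and the example above shows the gap is not merely technical.
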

\begin{proof}
See Appendix \ref{Pf:OptimalCachingProb_IdenticalBeta}.
\end{proof}

Next, based on the results in Propositions~\ref{Prop:SumWeighted} and \ref{Prop_WeightedSum}, the structure of the optimal placement policy is derived as shown in Theorem~\ref{Tm:OptimalSetting}, which enables low-complexity sequential computation of the optimal placement probabilities. 

\begin{theorem}[Sequential Computation of Optimal Placement Probabilities]\label{Tm:OptimalSetting}
\emph{One possible policy for optimal TLCP for the HCNs with uniform SIR thresholds is given as follows: 
\begin{equation}\label{Eq:OptimalCacheProb_MultiTier}
p_{mk}^{*}\!\!=\!\!\begin{cases}
1, \!\!\!\!& \text{if} \;q_m \geq T_1^{\prime}, \\
\!\!\min\!\l(\frac{1}{z_k}\!\displaystyle{\sum_{j=1}^k} \zeta_{mj}^* z_j\!-\! \frac{1}{z_k}\sum_{j=1}^{k-1} p_{mj}^* z_j,1\!\r), \!\!\!\!& \text{if}\;  T_0^{\prime} \!<\! q_m \!<\! T_1^{\prime},\!\! \\
0,\!\!\!\! & \text{if} \; q_m \leq T_0^{\prime}, 
\end{cases}
\end{equation}
where 
\begin{equation}
\zeta_{mj}^*=\frac{g_m^*}{ \sum_{i=m}^{M}g_i^*}\l(C_k-\sum_{i=1}^{m-1} p_{ik}^*\r), 
\end{equation}
and $g_m^*$ is as given in Proposition~\ref{Prop:SumWeighted}. 
}
\end{theorem}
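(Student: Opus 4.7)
The plan is to invoke Proposition~\ref{Prop_WeightedSum} and reduce the claim to two properties of the matrix $\{p_{mk}^*\}$ defined in \eqref{Eq:OptimalCacheProb_MultiTier}: (i) the per-file identity $\sum_{k=1}^K p_{mk}^* z_k = g_m^*$ for every $m$, and (ii) feasibility, i.e., $p_{mk}^* \in [0,1]$ together with $\sum_{m=1}^M p_{mk}^* \leq C_k$ for every $k$. Once both hold, Proposition~\ref{Prop_WeightedSum} immediately certifies that $\{p_{mk}^*\}$ solves Problem~P2. The two extremes of \eqref{Eq:OptimalCacheProb_MultiTier} are then immediate: $q_m \geq T_1'$ yields $g_m^* = \sum_k z_k$ by Proposition~\ref{Prop:SumWeighted}, which combined with $p_{mk}^* \leq 1$ and~(i) forces $p_{mk}^* = 1$; $q_m \leq T_0'$ yields $g_m^* = 0$ and hence $p_{mk}^* = 0$.

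The substance is the diversity range $T_0' < q_m < T_1'$, which I would handle by induction on $m$ in order of decreasing popularity, assuming for all prior files $i<m$ that $\sum_k p_{ik}^* z_k = g_i^*$ and $\sum_{i<m} p_{ik}^* \leq C_k$. The pivotal calculation is to verify that $\sum_{j=1}^K \zeta_{mj}^* z_j = g_m^*$: substituting the definition of $\zeta_{mj}^*$, using the inductive identity to rewrite $\sum_{i<m}\sum_j p_{ij}^* z_j = \sum_{i<m} g_i^*$, and invoking the active Lagrange condition $\sum_j C_j z_j = \sum_i g_i^*$ from Proposition~\ref{Prop:SumWeighted} yields exactly $g_m^*$. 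Given this, rewriting the recursion in \eqref{Eq:OptimalCacheProb_MultiTier} (with the outer $\min$ inactive) as $z_k p_{mk}^* = \sum_{j \leq k} \zeta_{mj}^* z_j - \sum_{j<k} p_{mj}^* z_j$ and telescoping over $k$ gives property~(i) at a stroke.

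The first subtlety is the outer $\min$. Because the subtracted term in the recursion uses the actual (possibly capped) $p_{mj}^*$ rather than the raw target $\zeta_{mj}^*$, whenever truncation forces $p_{mj}^*=1$ the deficit $z_j(\zeta_{mj}^*-1)$ is automatically rolled forward into the update at tier $j+1$. The condition $g_m^* < \sum_k z_k$ available in the diversity range guarantees that the cap cannot activate at tier~$K$, so the cascade resolves before the tier list is exhausted and the telescoping still closes. Non-negativity of $p_{mk}^*$ follows inductively from that of $\zeta_{mj}^*$, which in turn uses the partial capacity bound.

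The main obstacle is the tier capacity constraint in~(ii). In the uncapped regime a direct telescoping of the residual $R_m = C_k - \sum_{i<m} p_{ik}^*$ yields the closed form $p_{mk}^* = g_m^* C_k / \sum_i g_i^*$, so $\sum_m p_{mk}^* = C_k$ exactly. The cascading spillover from $\min$-truncation shifts mass from earlier to later tiers of the same file and thus threatens to overfill later tiers. I would handle this via a joint induction that tracks both the per-file residual deficit (across tiers) and the per-tier residual capacity (across files), exploiting the fact that the proportional definition of $\zeta_{mj}^*$ in terms of $C_j - \sum_{i<m} p_{ij}^*$ couples the two so that any mass rolled forward is drawn from strictly positive remaining capacity. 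Once $\sum_m p_{mk}^* \leq C_k$ is secured, combining with~(i) and $\sum_m g_m^* = \sum_k C_k z_k$ gives $\sum_k z_k(\sum_m p_{mk}^* - C_k) = 0$, which forces tier-wise equality and completes the verification.
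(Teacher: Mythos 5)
Your skeleton is the same as the paper's: reduce optimality to (i) the per-file identity $\sum_{k=1}^{K} p_{mk}^* z_k = g_m^*$ plus (ii) per-tier feasibility with equality, establish the pivotal identity $\sum_{j=1}^{K}\zeta_{mj}^* z_j = g_m^*$ from $\sum_{i\ge m} g_i^* = \sum_{k} z_k\bigl(C_k-\sum_{i<m}p_{ik}^*\bigr)$ together with the active constraint $\sum_{i} g_i^*=\sum_k C_k z_k$, and then telescope the recursion over $k$. That core computation is correct and is exactly what Appendix E does (there with sloppier indexing and without the explicit induction on $m$ that you supply); your treatment of the two extreme popularity ranges and your closing observation that $\sum_k z_k\bigl(\sum_m p_{mk}^*-C_k\bigr)=0$ forces tier-wise equality are both fine.

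The genuine gap is where you wave at the outer $\min$. Writing $d_k=\sum_{j\le k}\zeta_{mj}^* z_j-\sum_{j\le k}p_{mj}^* z_j$ for the running deficit, the recursion gives $d_k=\max\bigl(0,\,d_{k-1}+z_k(\zeta_{mk}^*-1)\bigr)$, and the telescoping closes only if $d_K=0$. Your claim that $g_m^*<\sum_k z_k$ prevents the cap from activating at tier $K$ is not substantiated: $\zeta_{mK}^*$ is proportional to the residual capacity of tier $K$ alone, so when earlier tiers have been exhausted by more popular files one can have $d_{K-1}=0$ yet $\zeta_{mK}^*>1$, leaving $d_K>0$. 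Concretely, take $K=2$, $z_1=z_2=1$, $C_1=1$, $C_2=M=3$ and a popularity profile for which Proposition~1 returns $g^*=(2,\,1.5,\,0.5)$: file $1$ fills tier $1$ completely, so for file $2$ one gets $\zeta_{21}^*=0$, $\zeta_{22}^*=1.5$, and the rule yields $\sum_k p_{2k}^* z_k=1\ne g_2^*$. (In that configuration no feasible $\mathbf{P}$ attains $g_2^*=1.5$ at all, so the relaxation P3 is strict and Proposition~2 --- on which both you and the paper lean --- is itself in doubt; the statement is only safe under the unstated assumption that the caps never bind mid-recursion.) Your part (ii) is likewise announced as ``a joint induction that tracks both residuals'' but never carried out. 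To be fair, the paper's own proof is silent on both points --- it asserts $\sum_k p_{mk}^* z_k=\sum_k\zeta_{mk}^* z_k$ and $p_{Mk}^*=C_k-\sum_{i<M}p_{ik}^*$ without ever confronting the truncation --- so you have correctly located the weak joints of the argument; you just have not closed them, and at least the first one cannot be closed as stated.
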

\begin{proof}
See Appendix \ref{Pf:OptimalSetting}.
\end{proof}
A key observation of the policy structure in Theorem~\ref{Tm:OptimalSetting} is that $p_{mk}^{*}$ depends only on  $\{p_{ij}^{*}\}$ with $i <  m$ and $j < k$. This suggests that the optimal placement probabilities can be computed \emph{sequentially} as shown in Algorithm~$2$ and thus the computational complexity of Algorithm 2 is $O(DMK)$, where $D$ is the number of iterations needed to find the optimal Lagrange multiplier.

One can observe from Proposition~\ref{Prop:SumWeighted} that the optimal solution for Problem P2 is not unique. In other words, there may exist a set of placement probabilities different from that computed using Algorithm~$2$ but achieving the same hit probability. 

\begin{algorithm}[t!]
  \centering
  \caption{Sequential Computation of Optimal Placement Probabilities for Multi-tier HCNs
  with  Uniform SIR Thresholds.}
  \label{tab:PropAlloc}
  \begin{tabular}{l}
    1. Compute $\eta^*$ using Algorithm~$1$ and $\{g_m^*\}$ using Proposition~\ref{Prop:SumWeighted}\\
    2. For $m=1:M$\\
        \;\;\;\;\;\; for $k=1:K$\\
        \;\;\;\;\;\;\;\;\;\; set $p_{mk}^{*}$ according to \eqref{Eq:OptimalCacheProb_MultiTier}\\
        \;\;\;\;\;\;\;\;\;\; update $C_k^{\prime}=C_k^{\prime}-p_{mk}^{*}$\\
        \;\;\;\;\;\;\;\;end\\
        \;\;\;\;\;end\\
  \end{tabular}
\end{algorithm}

\subsection{Multi-tier HCNs with Non-Uniform SIR Thresholds}

For the current  case, the problem of  optimal content placement is Problem P0. As the problem is non-convex, it is  numerically complex to solve and also difficult to develop low-complexity algorithms by analyzing  the optimal policy structure. Therefore, a low-complexity sub-optimal algorithm is proposed for content placement for the current case. The algorithm is designed based on approximating the hit probability in Theorem~\ref{Thm:HitProb} by  neglecting the effects of the  placement probability of other tiers on the hit probability of the $k$-th tier. Specifically, given $z_i = \lambda_i P_i^\delta$ as defined previously and by replacing the term $\sum_i p_{mi}z_i$ with $p_{mk}z_k$, the hit probability in Theorem~\ref{Thm:HitProb} can be approximated by  $\widetilde{\mathcal{P} }$ given as 
\begin{align}
\widetilde{\mathcal{P} } &= \sum_{m=1}^{M}  \sum_{k=1}^{K} \frac{q_{m} p_{mk}z_k}{W(\beta_k) p_{mk} z_k+V(\beta_k) \sum_{i=1}^{K}z_i}\nn \\
&= \sum_{k=1}^{K} \underbrace{\sum_{m=1}^M \frac{q_m  p_{mk} }{W(\beta_k) p_{mk}
+\widetilde{V}(\beta_k)}}_{\widetilde{\mathcal{P}}_k},
\end{align}
where $\widetilde{V}(\beta_k)=V(\beta_k) \frac{\sum_{i=1}^{K}z_i}{z_k}$. Thus, $\widetilde{\mathcal{P} } = \sum_{k=1}^{K} \widetilde{\mathcal{P}}_k$ where $\{\widetilde{\mathcal{P}}_k\}$ are independent of  each other. As a result, maximizing $\widetilde{\mathcal{P} } $ is equivalent to  separate maximization of individual  summation terms  $\{\widetilde{\mathcal{P}}_k\}$.  Therefore, Problem P0  can be approximated by $K$   single-tier optimization problems,  each of which  is written as:
\begin{equation}\tag{$\textbf{P4}$}\label{Opt_GenralCase}
\begin{aligned}
\mathop {\max}\limits_{\bf{p_k}} \; &{\sum_{m=1}^{M} q_{m} \frac{p_{mk}}{W(\beta_k) p_{mk} + \widetilde{V}(\beta_k)} } \\
{\textmd{s.t.}}\;\;&{\sum_{m=1}^{M}p_{mk}\leq C_k},  \\
& p_{mk}\in [0,1],  \forall m.
\end{aligned}
\end{equation}
Using the results in the case of single-tier HCNs in Theorem
\ref{OptimalCaching_SingleTier}, we derive the sub-optimal content-placement policy as shown in the following proposition. 
\begin{proposition}[Sub-Optimal TLCP for Multi-Tier HCNs with Non-Uniform SIRs]\label{Sub-OptimalCaching_MultiTier}
\emph{For the multi-tier cache-enabled HCNs with non-uniform SIR thresholds,  the optimal
TLCP placement probabilities, denoted by $\{\tilde{p}^*_{mk}\}$, that solve Problem~P4  are given as
\begin{equation}
\tilde{p}_{mk}^{*}(u_k^*)=\begin{cases}
1, & \text{if} \;\; q_m \geq \widetilde{T}_{1k}, \\
\frac{\sqrt{\widetilde{V}(\beta_k)}}{\sqrt{u_k^*}W(\beta_k)}\sqrt{q_m} - \frac{\widetilde{V}(\beta_k)}{W(\beta_k)}, &
\text{if}\;\;  \widetilde{T}_{0k} < q_m < \widetilde{T}_{1k},\!\! \\
0, & \text{if} \;\; q_m \leq \widetilde{T}_{0k},
\end{cases}
\end{equation}
where $\widetilde{T}_{1k}=\frac{u_k^* (W(\beta_k)+\widetilde{V}(\beta_k))^2}{\widetilde{V}(\beta_k)}$ and $\widetilde{T}_{0k}=u_k^* \widetilde{V}(\beta_k)$.
The optimal dual variable $u_k^*$ satisfies the equality
\begin{equation}
\sum_{m=1}^M p_{mk}^*(u_k^*)=C_k.
\end{equation}
}
\end{proposition}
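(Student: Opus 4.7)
\proof
The plan is to observe that Problem P4 is structurally identical, tier by tier, to the single-tier problem P1 already solved in Theorem~\ref{OptimalCaching_SingleTier}, so the proposition follows by a direct transcription of that argument after the decoupling step that produced P4.

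First, I would note that the approximation $\widetilde{\mathcal P}=\sum_{k=1}^K \widetilde{\mathcal P}_k$ derived just before the statement separates the aggregate objective into $K$ sums, each depending only on the column $\mathbf p_k=(p_{1k},\ldots,p_{Mk})$; moreover the capacity constraint $\sum_m p_{mk}\le C_k$ and the box constraints $p_{mk}\in[0,1]$ also decouple across $k$. Therefore maximizing $\widetilde{\mathcal P}$ reduces to solving, independently for every $k$, Problem P4 in the variables $\mathbf p_k$. This decoupling is the conceptual core of the argument; once it is in place, the remainder is essentially a change of notation.

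Next, fix $k$ and rewrite P4 as
\begin{equation*}
\max_{\mathbf p_k}\;\sum_{m=1}^M q_m\,\frac{p_{mk}}{W(\beta_k)p_{mk}+\widetilde V(\beta_k)}\quad\text{s.t.}\;\sum_m p_{mk}\le C_k,\;p_{mk}\in[0,1].
\end{equation*}
Each summand is of the form $q_m x/(W(\beta_k)x+\widetilde V(\beta_k))$, which is strictly concave and monotone increasing in $x\in[0,1]$ since $\widetilde V(\beta_k)>0$ and $W(\beta_k)>0$; hence the objective is concave and the feasible set is a polytope, so P4 is a convex program. Monotonicity in each $p_{mk}$ also forces the capacity constraint to bind at the optimum, giving $\sum_{m=1}^M \tilde p_{mk}^*=C_k$, which is the relation pinning down $u_k^*$ in the statement.

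Finally, I would form the Lagrangian $L(\mathbf p_k,u_k)=\sum_m q_m p_{mk}/(W(\beta_k)p_{mk}+\widetilde V(\beta_k))+u_k(C_k-\sum_m p_{mk})$ and apply the KKT conditions exactly as in the proof that yielded Theorem~\ref{OptimalCaching_SingleTier}. Setting $\partial L/\partial p_{mk}=0$ on the interior gives $q_m\widetilde V(\beta_k)/(W(\beta_k)p_{mk}+\widetilde V(\beta_k))^2=u_k$, which inverts to the interior branch $\tilde p_{mk}^*=\sqrt{\widetilde V(\beta_k)/u_k^*}\sqrt{q_m}/W(\beta_k)-\widetilde V(\beta_k)/W(\beta_k)$; the boundary branches $\tilde p_{mk}^*\in\{0,1\}$ correspond to the complementary-slackness cases where $q_m$ falls outside the interval on which the interior solution lies in $(0,1)$, and a short calculation shows these two critical popularity values are exactly $\widetilde T_{0k}=u_k^*\widetilde V(\beta_k)$ and $\widetilde T_{1k}=u_k^*(W(\beta_k)+\widetilde V(\beta_k))^2/\widetilde V(\beta_k)$. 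Since no genuinely new obstacle appears beyond what was handled for P1, the main care is just to keep the tier index $k$ consistent throughout; there is no step I expect to be substantively harder than in the single-tier case.
\endproof
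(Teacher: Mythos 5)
Your proposal is correct and follows essentially the same route as the paper: the paper obtains Proposition~\ref{Sub-OptimalCaching_MultiTier} by observing that, after the decoupling yielding $\widetilde{\mathcal{P}}=\sum_k\widetilde{\mathcal{P}}_k$, each Problem~P4 is an instance of the single-tier Problem~P1 with $V(\beta)$ replaced by $\widetilde{V}(\beta_k)$ and $C$ by $C_k$, and then applies the Lagrangian/KKT argument of Theorem~\ref{OptimalCaching_SingleTier}, which is exactly your derivation. Your stationarity computation, the inversion to the interior branch, and the thresholds $\widetilde{T}_{0k}$ and $\widetilde{T}_{1k}$ all check out.
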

The above sub-optimal TLCP policy approximates problem P0 as $K$ independent single-tier optimization problems. Thus, the corresponding computational complexity is $O(DMK)$, where $D$ denotes the number of iterations needed to find the optimal Lagrange multiplier for each tier. In addition, the numerical results  in the next section show that it  can attain close-to-optimal performance.

\section{Simulation Results}
In this section, simulation is conducted to validate the optimality of the  content-placement
policies derived in the preceding section and to compare the performance of the strategy of TLCP with conventional ones. The benchmark strategies include the
\emph{``most popular'' content placement} (MPCP) that caches the most popular contents in a
greedy manner and the \emph{hybrid content placement} (HCP) proposed in
\cite{CacheMulticast_HWN}. Our simulation is based on the following settings unless
specified otherwise. The number of BS tiers is $K=2$ and  the path-loss exponent $\alpha=3$.
The BS transmission power for the two tiers are $P_1=46$ dBm and $P_2=30$ dBm, respectively.
The SIR threshold for tier~$1$ is fixed at $\beta_1=-4$ dB while the other $\beta_2$ is a variable. 

\begin{figure}
\centering \includegraphics[width=8.5cm]{./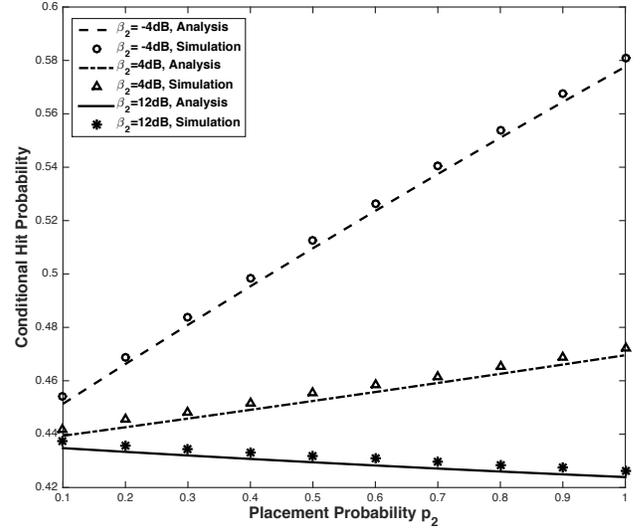}
\caption{Conditional hit probability versus caching probability with $\lambda_2=5 \lambda_1$, $p_1=1$.}\label{Fig:ConHitProb}
\end{figure}

\begin{figure*}[t]
\centering
\subfigure[Optimal hit probabilities under different content placement policies with $\beta_1=\beta_2=-4$ dB, $C_1=10$, $C_2=8$, $M=20$.]{\includegraphics[width=8cm]{./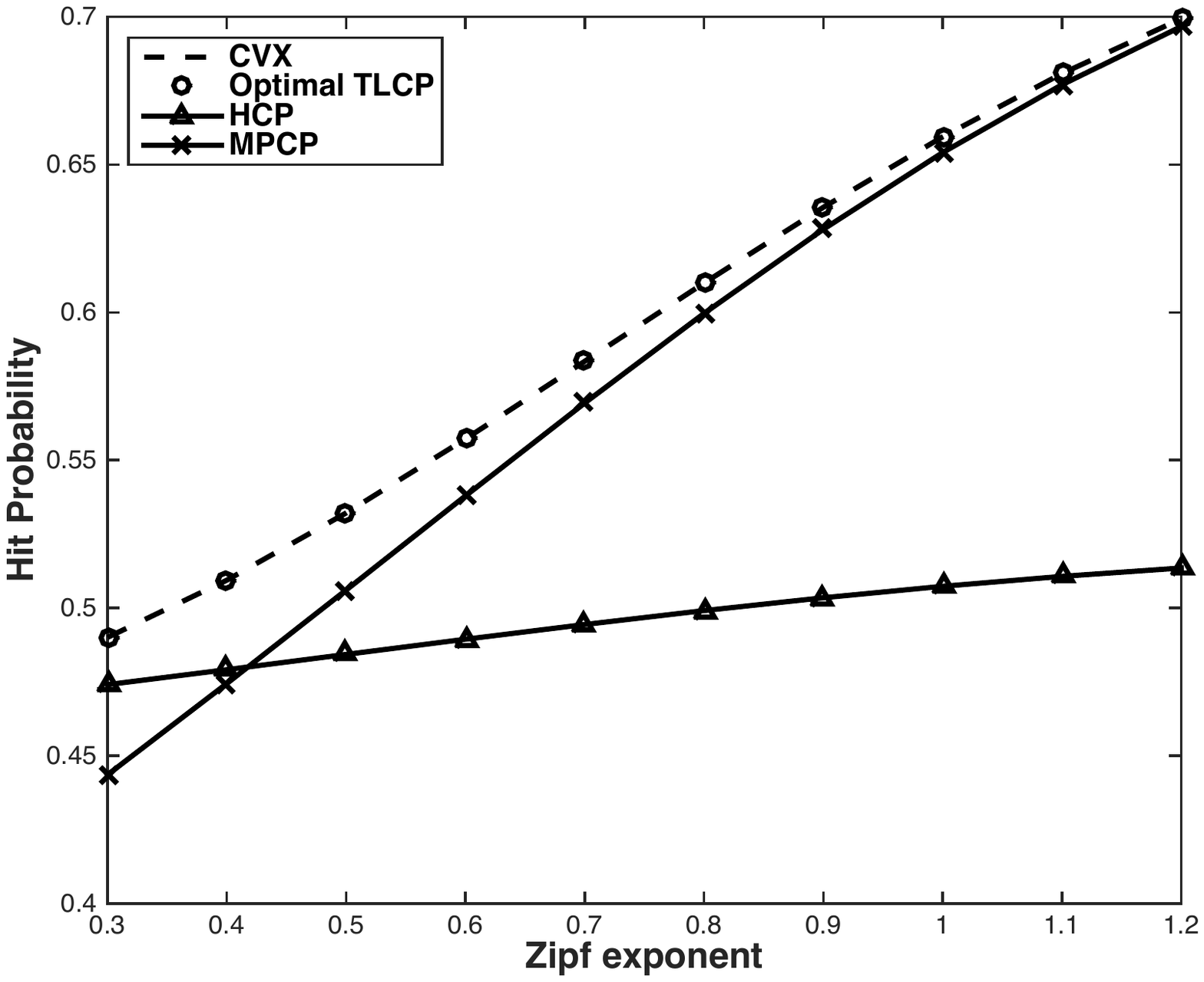}}
\subfigure[Optimal Hit probability for different throughput with $C_1=60$, $C_2=40$, $M=120$]{\includegraphics[width=8cm]{./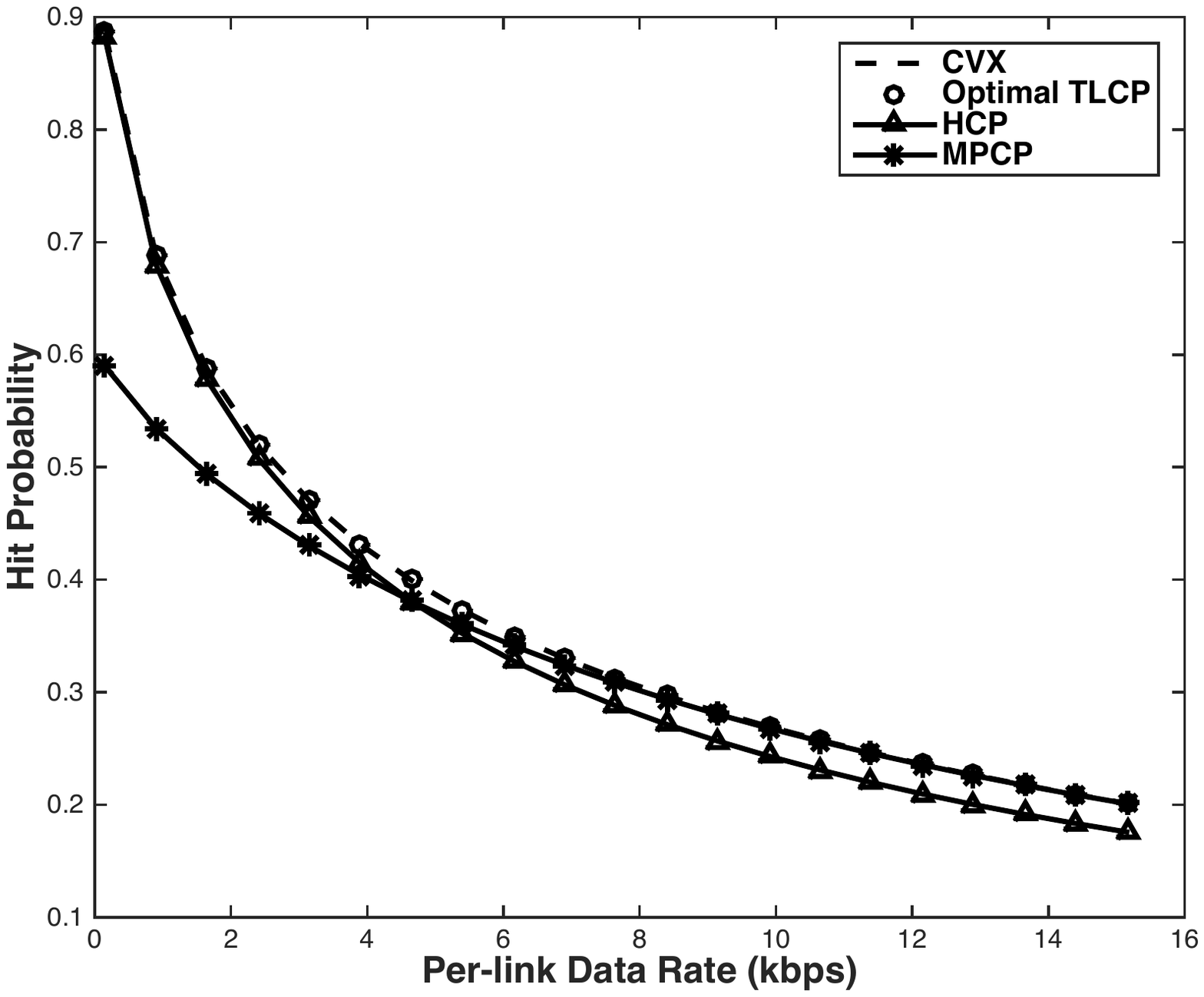}}
\caption{Optimal hit probabilities under different content placement policies with $\lambda_2=10\lambda_1$.}\label{Fig:OptimalHitProb_Compare}
\end{figure*}

\subsection{Conditional Hit Probability}
The conditional hit probability for a typical file $\mathcal{F}_{m}$ versus caching probability $p_2$ is shown in Fig. \ref{Fig:ConHitProb}. The analytical results are computed numerically using Lemma
\ref{Lem:ConditionalHitProb} and the simulated ones are obtained from Monte Carlo simulation using Matlab. First, it is observed that the simulated results match the analytical results well,
which validates our analysis. In addition, the conditional hit probability
increases with the growing placement probability $p_2$ if $\beta_2=\beta_1$. However, it
does not necessarily hold for the case $\beta_2>\beta_1$, which shows that the effects of
placement probability on the hit probability differ with SIR threshold. This is because
increasing the placement probability increases the association probability of that tier (see
Lemma \ref{Lemma:Association probability}) and thus decreases the conditional hit
probability if that tier has smaller hit probability due to the larger SIR threshold.
Meanwhile, it reduces the serving distance (see Lemma \ref{Lemma: Serving distance}) and
thus increases the conditional hit probability. The (final) effects of placement probability
on the hit probability are determined by the absolute values of the above increment and decrement. 

\subsection{Optimal Content Placement}

\begin{figure*}[t]
\centering
\subfigure[Hit probability for different $C_1$ with $C_2=8$, $M=20$.]{\includegraphics[width=8cm]{./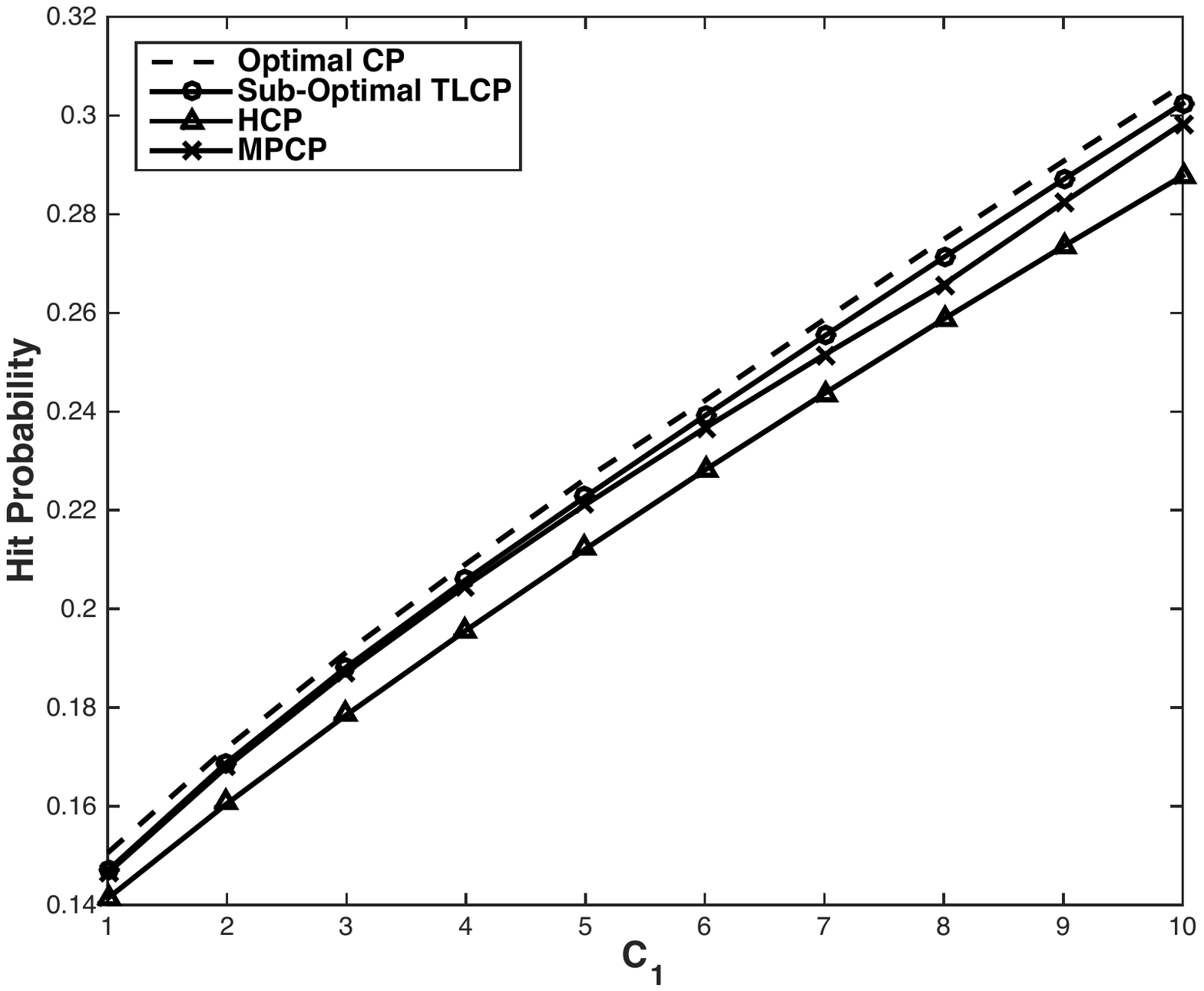}}
\subfigure[Hit probability for different $M$ with $C_1=60$, $C_2=40$]{\includegraphics[width=8cm]{./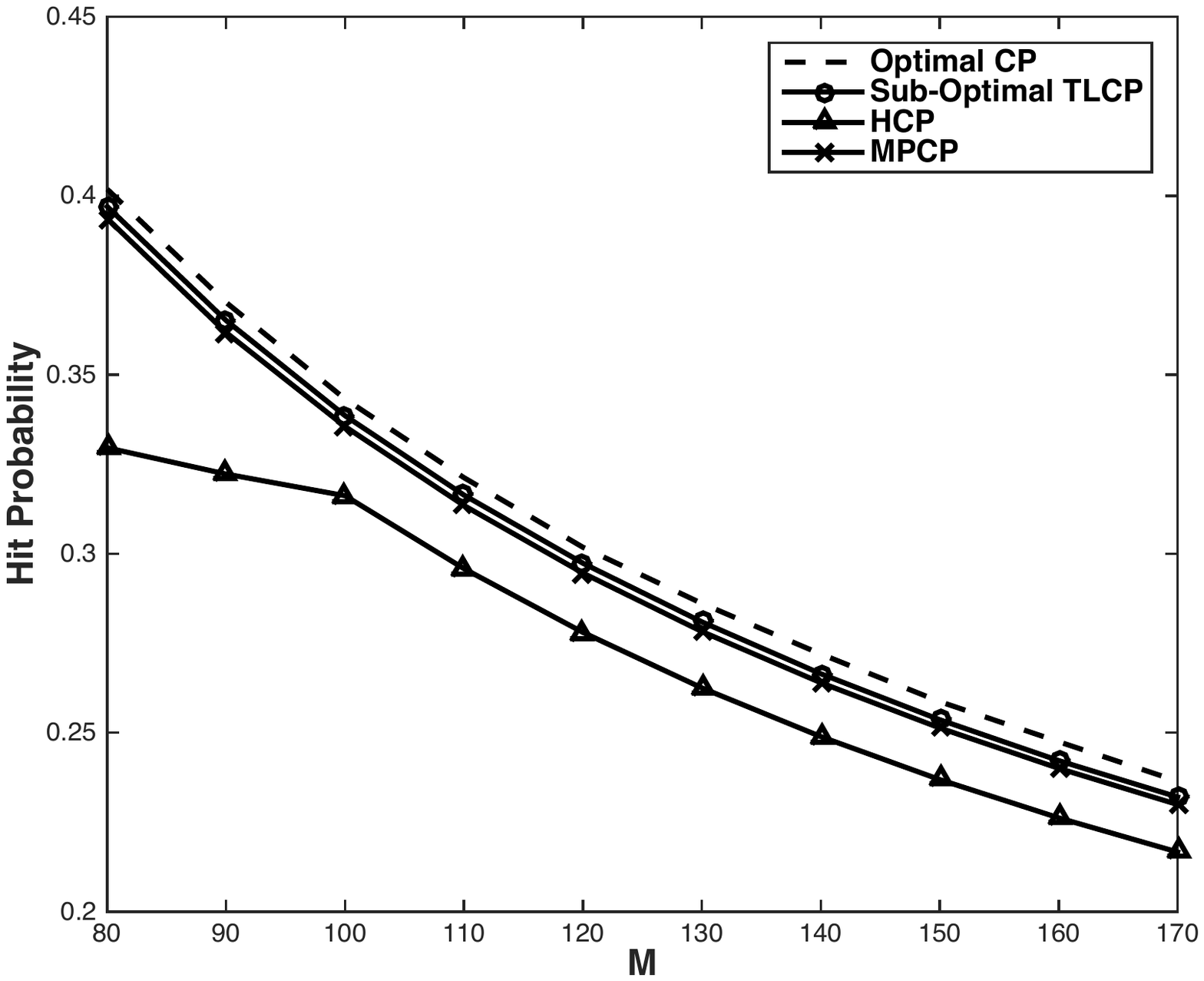}}
\caption{Hit probabilities under different content placement policies with $\lambda_2=10\lambda_1$, $\beta_1=-4$ dB, $\beta_2=-2$ dB.}\label{Fig:OptimalHitProb}
\end{figure*}

Fig. \ref{Fig:OptimalHitProb_Compare} compares the performance of the optimal TLCP proposed in this
paper (Theorem \ref{Tm:OptimalSetting}) with MPCP and
HCP. For MPCP, each macro-cell BS (or small-cell BS) caches the
$C_1$ (or $C_2$) most popular files. For HCP,  each macro-cell BS caches the $C_1$ most
popular files while each small-cell BS caches the remaining files with optimal probabilistic
content placement given in  Theorem \ref{OptimalCaching_SingleTier}. First of all, Fig. \ref{Fig:OptimalHitProb_Compare} (a) shows that the hit probabilities under these three content placement policies increase as
the content popularity becomes more skewed (a growing Zipf exponent $\gamma$),  aligned with
intuition. Next, TLCP is observed to achieve higher hit probability than MPCP and HCP due to
the content densification and diversity (see Remark \ref{MultiplexityDiversity}). Further, we
observe that the gain over MPCP decreases with a growing $\gamma$ since MPCP is a 
popularity-aware policy. In contrast, the gain over HCP increases with a growing $\gamma$. This is because, in the HCP, only Macro-cell tier caches the $C_1$ most popular files. In addition, the optimality of TLCP is verified by comparing the results given by the standard optimization tool CVX. Last, from Fig. \ref{Fig:OptimalHitProb_Compare} (b), it is observed that the optimal hit probability increases as the per-link data rate reduces due to the reducing SIR threshold. In particular, the maximum hit probability (when the data rate approximates to 0) is less than 1 since the caching capacity is limited.

The hit probabilities under different CP policies, including the optimal CP, sub-optimal TLCP (see Proposition
\ref{Sub-OptimalCaching_MultiTier}), MPCP, and HCP, versus different cache capacities and the number of contents are shown
in Fig. \ref{Fig:OptimalHitProb}(a) and Fig. \ref{Fig:OptimalHitProb}(b), respectively. The optimal CP under this case (i.e., multi-tier HCNs with non-uniform SIR thresholds) is derived by adopting the dual methods for non-convex optimization problem in \cite{TimeSharing} since Problem P0 has the same structure as that in \cite{TimeSharing} and it satisfies the time-sharing condition (the proof is shown in Appendix \ref{Pf:Time-sharing}). Compared with the optimal CP, the sub-optimal TLCP provides close-to-optimal performance. It should be noted that the computational complexity of optimal solution is linear in the number of files $M$, but exponential in the number of BS tiers $K$, since it involves solving $M$ nonconvex optimization problems, corresponding to the $M$ tones, each with $K$ variables. While the computational complexity of our proposed sub-optimal TLCP algorithm is linear with \emph{both} $M$ and $K$. In addition, besides the obvious monotone-increasing
hit probability with cache capacity, we observe that the sub-optimal TLCP outperforms both the HCP and MPCP. Finally, it is shown that the hit probability increases with the growing cache capacity and decreases with the growing number of contents, which coincides with our intuition.

\section{Conclusion}
In this paper, we have studied the hit probability and the optimal content placement of the
cache-enabled HCNs where the BSs are distributed as multiple independent PPPs and the files
are probabilistically cached at BSs in different tiers with different BS densities,
transmission powers, cache capacities and SIR thresholds. Using stochastic geometry, we have
analyzed the hit probability and shown that it is affected by both the physical layer and
content-related parameters. Specifically, for the case where all the tiers have the uniform SIR thresholds,
the hit probability increases with all the placement probabilities and converges to its
maximum (constant) value as all the probabilities achieve one without considering the cache
capacity constraint. Then, with the cache capacity constraint, the optimal
content placement strategy has been proposed to maximize the hit probability for both single- and
multi-tier HCNs. We have found that the placement probability for each file has the OPP
caching structure, i.e., the optimal placement probability is linearly proportional to the
square root of offset-popularity with truncation to enforce the range for the probability.
On the other hand, for multi-tier HCNs with uniform SIR thresholds, interestingly, the
weighted-sum of the optimal placement probabilities also has the OPP caching structure. Further, an optimal or a sub-optimal TLCP caching algorithm has been proposed to maximize the hit probability HCNs with uniform or non-uniform SIR thresholds, respectively.

The fundamental structure of the optimal content placement strategies proposed in this paper
provides useful guidelines and insights for designing cache-enabled wireless networks. As
a promising future direction, it would be very helpful to take BS cooperation and multicast
transmissions into account for practical networks. In addition, coded caching can be used to further enhance network performance.

\appendix{}

\subsection{Analysis of Backhaul Latency}\label{App:P_Delay}

Based on (9) in  \cite{Quek_BackhaulNetworks}, the mean packet delay in propagation via a wired backhaul network can be  approximated as 
\begin{equation} \label{Eq:T_uc}
T_{uc}=\left(1+1.28\frac{\lambda_b}{\lambda_g}\right)c_1+c_2,
\end{equation}
where $\lambda_b$ denotes the BS density, $\lambda_g$ is the gateway density, $c_1$ and $c_2$ are constants related to  the processing capability of a backhaul node.  In cache-enabled HCNs, the typical user has to retrieve using the backhaul network a  file that  is not cached at BSs. It follows from \eqref{Eq:T_uc} that the resultant  backhaul latency is given as
\begin{equation}\label{Eq:T_c}
T_c = (1-\mathcal{P})
\l(1+1.28\frac{(1-\mathcal{P})\lambda_b}{\lambda_g}\r)c_1+c_2,
\end{equation}
where $\mathcal{P}$ is the hit probability for cache-enabled HCNs. From (\ref{Eq:T_c}), we can see that the backhaul latency is a monotone-decreasing function of the hit probability as shown in Fig. \ref{Fig:P_Delay}. This shows that improving the hit probability of the radio access network reduces the burden on the backhaul network.

\begin{figure}
\centering \includegraphics[width=8.5cm]{./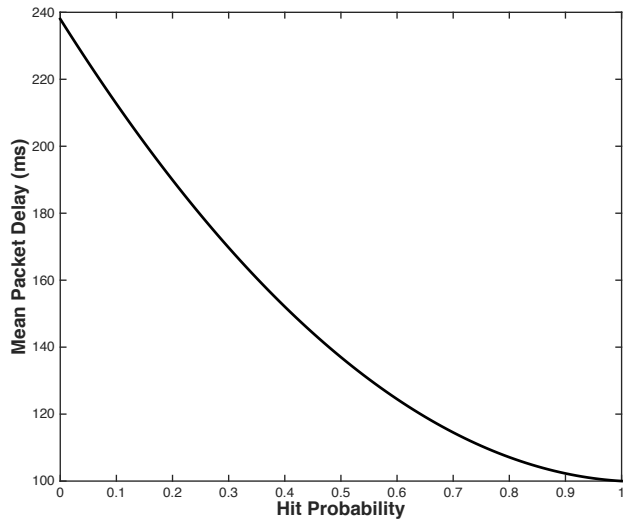}
\caption{Mean packet delay in wired backhaul versus hit probability. Here, $\lambda_b=10\lambda_g$, $c_1=10$ms, and $c_2=100$ms.}\label{Fig:P_Delay}
\end{figure}

\subsection{Proof of Lemma 1}\label{Pf: Association-Probability}

Define  $P_{\mathrm{r},m,k}=P_{k}R_{mk}^{-\alpha}$, which represents  the received signal  power due to transmissions by the BSs with
file $\mathcal{F}_m$ in the $k$-th tier, where $R_{mk}$ is the distance from the typical user to the nearest BS in content-centric tier $\Phi_{mk}$. According to the content-centric cell association, the association probability $A_{mk}$ is the probability that $P_{\mathrm{r},m,k}>P_{\mathrm{r},m,j},\:\forall j,\: j\neq k$.
Therefore,
\begin{align}
 A_{mk} &=\mathbb{E}_{R_{mk}}\left[\mathbb{P}\left[P_{\mathrm{r},m,k}\left(R_{mk}\right)>\max_{j,j\neq k}P_{\mathrm{r},m,j}\right]\right]\nonumber \\
 & =\mathbb{E}_{R_{mk}}\left[\prod_{j=1,j\neq k}^{K}\mathbb{P}\left[P_{\mathrm{r},m,k}\left(R_{mk}\right)>P_{\mathrm{r},m,j}\right]\right]\nonumber \\
 & =\mathbb{E}_{R_{mk}}\!\left[\prod_{j=1,j\neq k}^{K}\!\!\mathbb{P}\left[R_{mj}\!>\!(P_{j}/P_{k})^{1/\alpha }R_{mk}\right]\right] \nonumber \\
 & =\!\!\int_{0}^{\infty}\!\!\!\!\prod_{j=1,j\neq k}^{K}\!\!\mathbb{P}\left[R_{mj}\!>\!(P_{j}/P_{k})^{1/\alpha } r\right]f_{R_{mk}}(r)\mathrm{d}r.\label{eq:Amk_org}
\end{align}
To derive $A_{mk}$, $\mathbb{P}\left[R_{mj}>(P_{j}/P_{k})^{1/\alpha }r\right]$ and the
probability density function (PDF) of $R_{mk}$, denoted by $f_{R_{mk}}(r)$, are calculated as follows.
\begin{align}
 & \mathbb{P}\left[R_{mj}>(P_{j}/P_{k})^{1/\alpha } r\right]\nonumber \\
= & \mathbb{P}\left[\mathrm{No}\:\mathrm{BS}\:\mathrm{with}\:\mathrm{file}\: f_{m}\:\mathrm{closer}\:\mathrm{than}\:\left((P_{j}/P_{k})^{1/\alpha }r\right)\:\mathrm{in}\:\mathrm{the}\: j\mathrm{th}\:\mathrm{tier}\right]\nonumber \\
= & \exp\left(-\pi p_{mj} \lambda_{j}(P_{j}/P_{k})^{2/\alpha }r^{2}\right).\label{eq:Amk_step1}
\end{align}
Further, $f_{R_{mk}}(r)$ is derived by taking the derivative of $1-\mathbb{P}\left[R_{mk}>r\right]$ with respect
to $r$,
\begin{flalign}
f_{R_{mk}}\left(r\right) & =\frac{\mathrm{d}\left[1-\mathbb{P}\left[R_{mk}>r\right]\right]}{\mathrm{d}r}=2\pi  p_{mk} \lambda_{k}r\exp\left(-\pi p_{mk} \lambda_{k}r^{2}\right).\label{eq:Amk_step2}
\end{flalign}
Last, the expression of $A_{mk}$ is derived by substituting (\ref{eq:Amk_step1}) and (\ref{eq:Amk_step2}) into (\ref{eq:Amk_org}).

\subsection{Proof of Lemma \ref{Lem:ConditionalHitProb}}\label{Pf:ConditionalHitProb}
In order to calculate the conditional hit probability, we first derive the probability that
the typical user successfully receives the requested file from its given serving BS $X_{k}$
in the $k$-th tier, denoted by $\mathcal{P}_{m}(X_{k},k)$, as follows.
\begin{align}
& \mathcal{P}_{m}(X_{k},k) =\mathbb{P}\left[\frac{P_{k}h_{X_{k}}||X_{k}||^{-\alpha}}{{I}(\mathcal{F}_m)}>\beta_k \right]\nonumber \\
 & \overset{\left(a\right)}{=}\mathbb{E}_{{I}(\mathcal{F}_m)}\left[\exp\left(-\frac{\beta_k {I}(\mathcal{F}_m)||X_{k}||^{\alpha}}{P_{k}}\right)\right]\nonumber \\
 & \overset{\left(b\right)}{=}\prod_{i=1}^{K}\mathbb{E}_{\Phi_{mi}}\!\!\left\{ \prod_{X \in \Phi_{mi} \setminus X_{k}} \!\! \mathbb{E}_{h_X}\!\! \left[ \exp\left(-\frac{\beta_k P_i h_X||X_{k}||^{\alpha}}{P_{k} ||X||^{\alpha}}\right) \right] \right\} \nonumber \\
 & \;\;\;\;\cdot \prod_{i=1}^{K}\mathbb{E}_{\Phi_{mi}^{c}}\!\! \left\{ \prod_{X \in \Phi_{mi}^{c}} \!\! \mathbb{E}_{h_X}\!\! \left[ \exp\left(-\frac{\beta_k P_i h_X||X_{k}||^{\alpha}}{P_{k} ||X||^{\alpha}}\right) \right] \right\} \nonumber \\
 & \overset{\left(c\right)}{=}\prod_{i=1}^{K}\mathbb{E}_{\Phi_{mi}}\left[ \prod_{X \in \Phi_{mi} \setminus X_{k}} \left( 1+\frac{\beta_k P_i ||X_{k}||^{\alpha}}{P_{k} ||X||^{\alpha}}\right)^{-1} \right] \nonumber \\
 & \;\;\;\;\cdot \prod_{i=1}^{K}\mathbb{E}_{\Phi_{mi}^{c}}\left[ \prod_{X \in \Phi_{mi}^{c}} \left( 1+\frac{\beta_k P_i ||X_{k}||^{\alpha}}{P_{k} ||X||^{\alpha}}\right)^{-1} \right] \nonumber \\
 &  \overset{\left(d\right)}{=}\prod_{i=1}^{K} \exp \! \left\{ -\! \int_{\mathbb{R}^2\setminus\mathrm{b}(o, z_i)} \! \left[ 1\!-\!\left( 1\!+\! \frac{\theta}{||X||^{\alpha}}\right)^{-1}\! \right] p_{mi} \lambda_i \mathrm{d}X \! \right\} \nonumber\\
 & \;\;\;\;\cdot \prod_{i=1}^{K} \exp \! \left\{ -\! \int_{\mathbb{R}^2} \!\left[ 1\!-\!\left( 1\!+\! \frac{\theta}{||X||^{\alpha}}\right)^{-1}\!\right] (1-p_{mi}) \lambda_i \mathrm{d}X \!\right\} \nonumber\\
 &  \overset{\left(e\right)}{=}\prod_{i=1}^{K} \exp \left\{ -2 \pi p_{mi} \lambda_i \int_{z_i}^{\infty} \left[ 1-\left( 1+ \frac{\theta}{r^{\alpha}}\right)^{-1}\right] r \mathrm{d}r \right\} \nonumber\\
 & \;\;\;\;\cdot \prod_{i=1}^{K} \exp \! \left\{ -2 \pi (1\!-\!p_{mi}) \lambda_i \! \int_{0}^{\infty} \left[ 1\!-\!\left( 1\!+\! \frac{\theta}{r^{\alpha}}\right)^{-1} \! \right] r \mathrm{d}r \! \right\} \nonumber\\
 & =\prod_{i=1}^{K} \exp \left\{ -2 \pi p_{mi} \lambda_i \int_{z_i}^{\infty} \frac{r}{1+\frac{r^{\alpha}}{\theta}} \mathrm{d}r\right\} \nonumber \\
 & \;\;\;\;\cdot \prod_{i=1}^{K} \exp \left\{ -2 \pi (1-p_{mi}) \lambda_i \int_{0}^{\infty} \frac{r}{1+\frac{r^{\alpha}}{\theta}} \mathrm{d}r\right\} \nonumber \\
 & \overset{\left(f\right)}{=}\prod_{i=1}^{K} \exp \!\!\left[ - \delta \pi p_{mi} \lambda_i
 \frac{\theta z_{i}^{\alpha(\delta\!-\!1)}}{1\!-\!\delta}
 {}_{2}F_{1}\left(1,1\!-\!\delta;2\!-\!\delta;-\frac{\theta}{z_i^{\alpha}}\right)\right] \nonumber \\
 & \;\;\;\;\cdot \prod_{i=1}^{K} \exp\!\!\left[ -\delta \pi (1\!-\!p_{mi}) \lambda_i \theta ^{\delta} B(\delta,1\!-\!\delta)\!\right]\nonumber \\
 & \overset{\left(g\right)}{=}\exp \!\!\left[ -\!\sum_{i=1}^{K} \pi p_{mi} \lambda_i \!\left( \frac{P_i}{P_k}\right)^{\delta} \!\! Q\left(\!\beta_k \!\right) ||X_{k}||^2 \!\right] \nonumber \\
 & \;\;\;\;\cdot \!\exp \!\left[ -\!\sum_{i=1}^{K} V(\beta_k) \pi (1\!-\!p_{mi}) \lambda_i
 \!\left( \frac{P_i}{P_k}\right)^{\delta} \!\! ||X_{k}||^2 \right],
  \label{eq:p_Xmk}
\end{align}
where $\left(a\right)$ and $\left(c\right)$ come from taking expectation with respect to
$h\sim\exp\left(1\right)$; $\left(b\right)$ follows from the expression of
${I}(\mathcal{F}_m)$ in \eqref{Eq:I:Cond}; $\left(d\right)$ follows from the probability
generating functional of PPP, 
$\theta=\beta_k P_i ||X_{k}||^{\alpha}/P_k$, and $\mathrm{b}(o, z_i)$ denotes a ball of radius $z_i$ centered at the origin denoted by $o$. Note that the closest interferer in $\Phi_{mi}$ is at least at the distance
$z_i=(P_i/P_k)^{1/\alpha} ||X_{k}||$, while all the BSs in $\Phi_{mi}^{c}$ are the interferers. Equality (e) comes from converting from Cartesian to polar coordinates, (f) follows
from replacing $r^{\alpha}$ with $u$ and calculating the corresponding integral based on the
formula (3.194.2) and (3.194.3) in \cite{IntegralTable}, and finally (g) comes from the expressions of $z_i$, $\theta$, $Q(\beta_k)$, and $V(\beta_k)$.

Averaging over the distance $||X_{k}||$, we have the hit probability for a user requesting for
file $\mathcal{F}_{m}$ in the $k$-th tier:
\begin{align}
\mathcal{P}_{m}(k) \!=\!\mathbb{E}_{||X_{k}||}\left[\mathcal{P}_{m}(X_{k},k)\right]\!=\!\int_{0}^{\infty}\!\!\!\mathcal{P}_{m}(X_{k},k) f_{||X_{k}||}(x)\mathrm{d}x.\label{eq:p_k}
\end{align}
Last, the result of Lemma \ref{Lem:ConditionalHitProb} is obtained by substituting (\ref{eq:AssocProb})
and (\ref{eq:p_k}) into \eqref{Eq:HP_File_def} based on the law of total probability.

\subsection{Proof of Proposition \ref{Prop_WeightedSum}} \label{Pf:OptimalCachingProb_IdenticalBeta}

Since Problem P2 is convex, it can be solved by using the Lagrange method. The corresponding partial Lagrangian function is
\begin{equation}
L(\mathbf{p},\mathbf{u})\!=\!\sum_{m=1}^{M} q_{m} \frac{\sum_{k=1}^{K}p_{mk}z_k}{W \sum_{k=1}^{K}p_{mk}z_k + V^{\prime} }+\sum_{k=1}^{K}\!u_{k}(C_{k}\!-\!\sum_{m=1}^{M}\!p_{mk}),
\end{equation}
where $\mathbf{u}\!=\!(u_1,\ldots,u_K)\!\geq \!0$ denotes the Lagrangian multiplier. Taking derivative of $L(\mathbf{p},\mathbf{u})$ with respect to $p_{mk}$, we have
\begin{equation}
\frac{\partial L}{\partial p_{mk}} = \frac{q_m V^{\prime} z_{k}}{(V^{\prime}+W
\sum_{k=1}^{K} p_{mk} z_{k})^{2}} - u_{k}.
\end{equation}
Thus, given the optimal Lagrangian multiplier $\mathbf{u}^*$, the optimal placement probability $p_{mk}^{*}$ is expressed as
\begin{equation} \label{Eq:p_mk}
p_{mk}^{*}(u_{k}^*)=\begin{cases}
  1, & \text{if} \;\;\;\; q_m \geq \frac{u_k^* (W \sum_{k=1}^{K}
  z_k+V^{\prime})^2}{V^{\prime} z_{k}},\\
  \xi(u_k^*), & \text{if}\;\;\;\;  \frac{u_k^* V ^{\prime}}{z_k}< q_m < \frac{u_k^* (W
  \sum_{k=1}^{K} z_k+V^{\prime})^2}{V^{\prime} z_{k}},\\
  0, & \text{if} \;\;\;\; q_m \leq \frac{u_k^* V ^{\prime}}{z_k},
\end{cases}
\end{equation}
where $\xi(u_k^*)$ is the solution over $p_{mk}$ of the equation
\begin{equation}
\frac{q_m V^{\prime} z_{k}}{(V^{\prime}+W \sum_{k=1}^{K} p_{mk}^{*} z_{k})^{2}} = u_{k}^{*}. 
\end{equation}
Thus, we have
\begin{equation}\label{Eq:Xi}
\sum_{k=1}^{K} p_{mk}^{*} z_{k} = \left(\sqrt{q_m V^{\prime} z_k/u_k^*}-V^{\prime}\right)/W. 
\end{equation}
Note that \eqref{Eq:Xi} holds for all $k$. Thus, $\xi(u_k^*)$ in Eq.~(\ref{Eq:p_mk}) satisfies the following equation by denoting $\eta^*\!\!=\!\!u_k^*/z_k$ 
\begin{equation} \label{Eq:sum}
\sum_{k=1}^{K} p_{mk}^{*} z_{k} =\left(\sqrt{q_m V^{\prime}/\eta^*}-V^{\prime}\right)/W. 
\end{equation}

According to the KKT conditions, the dual variable $u_{k}$ satisfies the following equation:
\begin{equation}
u_k \left(C_k-\sum_{m=1}^{M} p_{mk}^*\right)=0. 
\end{equation}
Thus, the alternative multiplier $\eta^*$ satisfies
\begin{equation}
\eta^* \left(\sum_{k=1}^{K}C_k z_k-\sum_{m=1}^{M} \sum_{k=1}^{K} p_{mk}^* z_k\right)=0.
\end{equation}
If $\eta^*=0$ ($u_k^*=0$), according to Eq. (\ref{Eq:p_mk}), all files should be cached with probability 1 which conflicts with our assumption of limited cache capacity. Thus, we have
\begin{equation}\label{Eq: MultiplierE}
\sum_{k=1}^{K}C_k z_k=\sum_{m=1}^{M} \sum_{k=1}^{K} p_{mk}^* z_k. 
\end{equation}
According to  Eq. (\ref{Eq:p_mk}), Eq. (\ref{Eq:sum}) and Eq. (\ref{Eq: MultiplierE}), we
have $\sum_{k=1}^{K} p_{mk}^* z_k=g_{m}^*,  m = 1, 2, \ldots, M$.

\subsection{Proof of Theorem \ref{Tm:OptimalSetting}}\label{Pf:OptimalSetting}

In order to prove that ${p}_{mk}^*$ given in Theorem \ref{Tm:OptimalSetting} is the optimal placement probability for Problem P2, we need to follow the following  two steps: (1) $\sum_{k=1}^{K}{p}_{mk}^* z_k=g_m^*$ where $g_m^*$ is given in Proposition \ref{Prop:SumWeighted}. (2) $\sum_{m=1}^{M} p_{mk}^*=C_k$. This is because the objective function of Problem P2 monotonically increases with the growing placement probability and thus the optimal content probabilities satisfy the relaxed constraint with equality.

(1) Proof $\sum_{k=1}^{K}{p}_{mk}^* z_k=g_m^*$ :

To this end, we first prove $\sum_{k=1}^{K} \zeta_{mj}^* z_k=g_{m}^{*}$.
When $q_m \in (T_0^{\prime}, T_1^{\prime})$, we have $\sum_{i=m}^{M} \sum_{k=1}^{K} p_{ik}^* z_k = \sum_{i=m}^M g_i^*$ (since $g_m^*=\sum_{k=1}^{K} p_{ik}^* z_k$). On the other hand, $\sum_{i=m}^{M} \sum_{k=1}^{K} p_{ik}^* z_k =\sum_{k=1}^{K} \sum_{i=m}^{M} p_{ik}^* z_k = \sum_{k=1}^{K} C_k^{\prime} z_k$. Thus, we have the following equation:
\begin{equation}\label{eq:RowColumn}
\sum_{i=m}^M g_{i}^* = \sum_{k=1}^{K} C_k^{\prime} z_k.
\end{equation}
Based on \eqref{eq:RowColumn}, we have
\begin{equation}
\sum_{k=1}^{K} \zeta_{mj}^* z_k = \sum_{k=1}^{K} \frac{c_k^{\prime} g_m^* z_k}{\sum_{i=m}^{M} g_i^*} = \frac{g_m^* \sum_{k=1}^{K} c_k^{\prime} z_k}{\sum_{i=m}^{M} g_i^*} =g_m^*.
\end{equation}
Further, according to the expression of ${p}_{mk}^*$ in \eqref{Eq:OptimalCacheProb_MultiTier}, we have
\begin{equation}
\sum_{k=1}^{K} {p}_{mk}^* z_k = \sum_{k=1}^{K} \zeta_{mj}^* z_k = g_m^*.
\end{equation}
Thus, $\sum_{k=1}^{K}{p}_{mk}^* z_k=g_m^*$.

(2) Proof $\sum_{m=1}^{M} p_{mk}^*=C_k$:
\begin{align}
\sum_{m=1}^{M}{p}_{mk}^* &= \sum_{m=1}^{M-1} {p}_{mk}^* +  {p}_{Mk}^* = \sum_{m=1}^{M-1} {p}_{mk}^* +  C_k^{\prime} \nonumber \\
&= \sum_{m=1}^{M-1}{p}_{mk}^* + C_k - \sum_{m=1}^{M-1}{p}_{mk}^* = C_k.
\end{align}

\subsection{Proof of Time-sharing Condition} \label{Pf:Time-sharing}
In order to prove the Problem (P0) in this paper can be solved by the dual methods for nonconvex optimization problem in \cite{TimeSharing}, the following two steps are needed: 

(1) Prove Problem (P0) has the same structure as (4) in \cite{TimeSharing}:

By observation, the optimization problem (P0) can be rewritten as follows:
\begin{equation}\label{Opt_Transfer}
\begin{aligned}
\mathop {\max} \;\; &{\sum_{m=1}^{M} f_{m}(\mathbf{P}_{m})}\\
{\textmd{s.t.}}\;\;&{\sum_{m=1}^{M} h_{m}(\mathbf{P}_{m}) \leq \mathbf{C}}\\
\end{aligned}
\end{equation}
where $\mathbf{P}_{m} = (p_{m1},\cdots, p_{mk})$, \;\;\;$f_{m}(\mathbf{P}_{m}) = \sum_{k=1}^{K}  q_{m} \frac{p_{mk} \lambda_k P_k^{\delta}}{W(\beta_k)\sum_{i=1}^{K}p_{mi} \lambda_i P_i^{\delta}+V(\beta_k) \sum_{i=1}^{K}\lambda_i P_i^{\delta}}$, \\$h_m (\mathbf{P}_m) = [p_{m1},\cdots, p_{mk}]^{T}$ and $\mathbf{C}=[C_1,\cdots,C_K]^T$. 
Comparing Eq. (\ref{Opt_Transfer}) and (4) in \cite{TimeSharing}, we find that they have the same structure.

(2) Prove Problem (P0) satisfies the time-sharing condition introduced in \cite{TimeSharing}:

Let $\mathbf{P}_{mx}^{*}$ and $\mathbf{P}_{my}^{*}$ be optimal solutions to optimization problem (P0) with the constraint $\mathbf{C} = \mathbf{C}_{x}$ and $\mathbf{C} = \mathbf{C}_{y}$, respectively. To prove Problem (P0) satisfies the time-sharing condition introduced in \cite{TimeSharing}, we need to construct a feasible cache placement strategy $\mathbf{P}_{mz}$, such that the hit probability is at least $\upsilon \sum_{m=1}^{M} f_{m} (\mathbf{P}_{mx}^{*}) + (1-\upsilon) \sum_{m=1}^{M} f_{m} (\mathbf{P}_{my}^{*})$ with cache capacity at most $\upsilon \mathbf{C}_{x} + (1-\upsilon) \mathbf{C}_{y}$ for all $\upsilon$ between zero and one.

In our system, each BS corresponds to a cache placement strategy. Thus, such a feasible cache placement strategy  $\mathbf{P}_{mz}$ can be constructed by dividing the whole plane into two parts: the BSs in $\upsilon$ portion of which have $\mathbf{P}_{mz}=\mathbf{P}_{mx}^{*}$ and the BSs in $(1-\upsilon)$ portion of which have $\mathbf{P}_{mz}=\mathbf{P}_{my}^{*}$. Obviously, the resulting $\mathbf{P}_{mz}$ satisfies the cache capacity constraint $\upsilon \mathbf{C}_{x} + (1-\upsilon) \mathbf{C}_{y}$ and its hit probability achieves $\upsilon \sum_{m=1}^{M} f_{m} (\mathbf{P}_{mx}^{*}) + (1-\upsilon) \sum_{m=1}^{M} f_{m} (\mathbf{P}_{my}^{*})$. Therefore, the cache placement optimization problem satisfies the time-sharing condition.

\small
\bibliographystyle{IEEEtran}

\begin{IEEEbiography}
[{\includegraphics[width=1in,clip,keepaspectratio]{./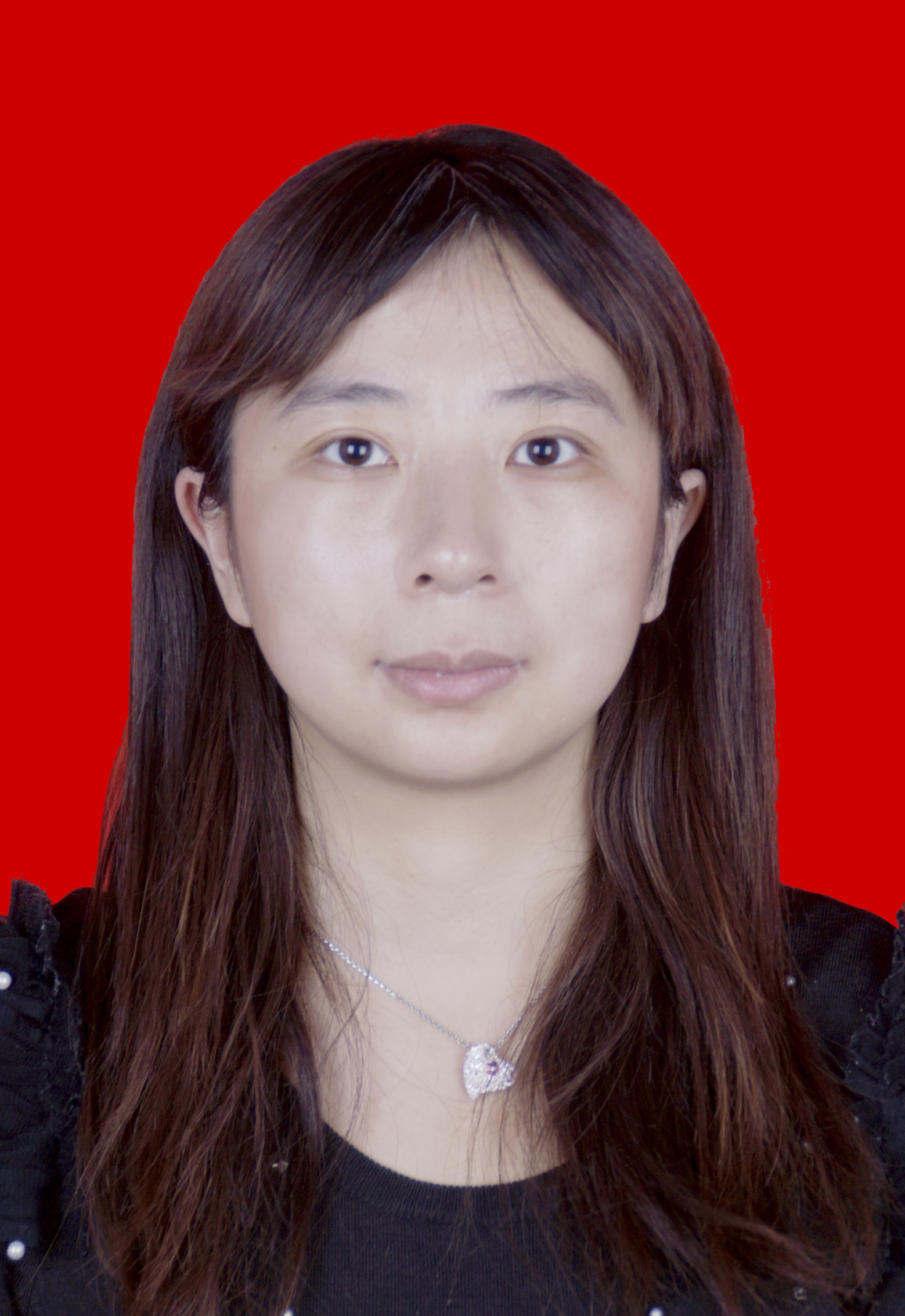}}]
{Juan Wen} received the B.Eng and the Ph. D. degree in telecommunications engineering at Xidian University, Xi'an, Shaanxi, China. Since Apr. 2016, she has been a postdoctoral researcher in the Dept. of Electrical and Electronics Engineering (EEE) at the University of Hong Kong. She was a visiting student at the University of Toronto from Sep. 2013 to Aug. 2014. Her research interests focus on the analysis and design of wireless networks using stochastic geometry. She was a recipient of the Best Paper Award at IEEE/CIC ICCC 2013.
\end{IEEEbiography}

\begin{IEEEbiography}
[{\includegraphics[width=1in,clip,keepaspectratio]{./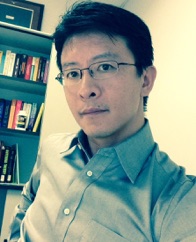}}]
{Kaibin Huang} (M'08-SM'13) received the B.Eng. (first-class hons.) and the M.Eng. from the National University of Singapore, respectively, and the Ph.D. degree from The University of Texas at Austin (UT Austin), all in electrical engineering.

Since Jan. 2014, he has been an assistant professor in the Dept. of Electrical and Electronic Engineering (EEE) at The University of Hong Kong. He used to be a faculty member in the Dept. of Applied Mathematics (AMA) at the Hong Kong Polytechnic University (PolyU) and the Dept. of EEE at Yonsei University. His research interests focus on the analysis and design of wireless networks using stochastic geometry and multi-antenna techniques.

He frequently serves on the technical program committees of major IEEE conferences in wireless communications. Most recently, he served as the lead chairs for the Wireless Comm. Symp. of IEEE Globecom 2017 and the Comm. Theory Symp. of IEEE GLOBECOM 2014 and the TPC Co-chairs for IEEE PIMRC 2017 and the IEEE CTW 2013. Currently, he is an editor for IEEE Transactions on Green Communications and Networking, and IEEE Transactions on Wireless Communications. He was an editor for  IEEE Journal on Selected Areas in Communications (JSAC) series on Green Communications and Networking in 2015-2016, for IEEE Wireless Communications Letters in 2011-2016, and for IEEE/KICS Journal of Communication and Networks in 2009-2015. He has edited a JSAC 2015 special issue on communications powered by energy harvesting.  Dr. Huang received the 2015 IEEE ComSoc Asia Pacific Outstanding Paper Award, Outstanding Teaching Award from Yonsei, Motorola Partnerships in Research Grant, the University Continuing Fellowship from UT Austin, and a Best Paper Award from IEEE GLOBECOM 2006 and PolyU AMA in 2013. 
\end{IEEEbiography}

\begin{IEEEbiography}
[{\includegraphics[width=1in,clip,keepaspectratio]{./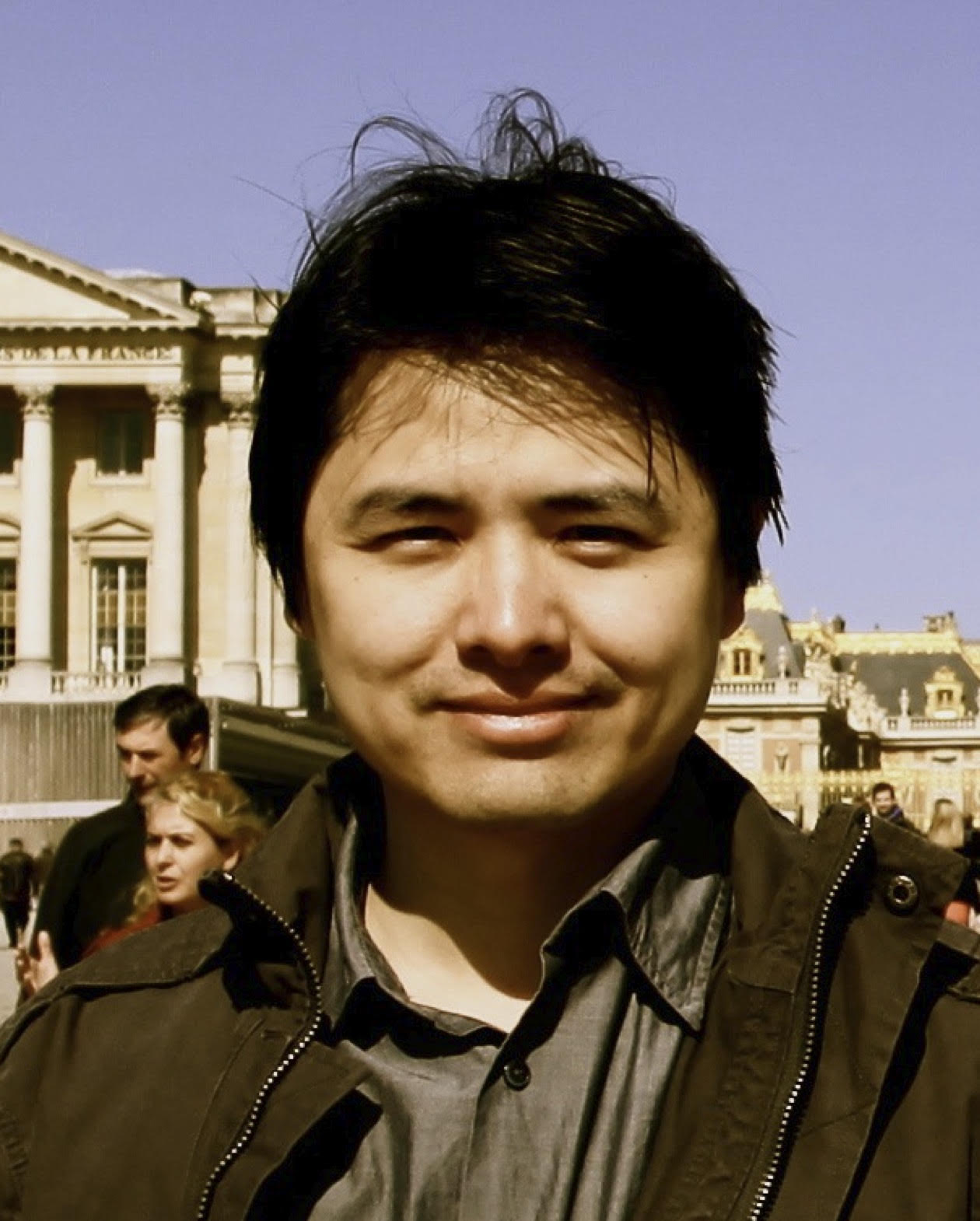}}]
{Sheng Yang} (M'07) received the B.E. degree in electrical engineering from Jiaotong University, Shanghai, China, in 2001, and both the engineer degree and the M.Sc. degree in electrical engineering from Telecom ParisTech, Paris, France, in 2004, respectively. In 2007, he obtained his Ph.D. from Université de Pierre et Marie Curie (Paris VI). From October 2007 to November 2008, he was with Motorola Research Center in Gif-sur-Yvette, France, as a senior staff research engineer. Since December 2008, he has joined CentraleSupélec where he is currently an associate professor. From April 2015, he also holds an honorary faculty position in the department of electrical and electronic engineering of the University of Hong Kong (HKU). He received the 2015 IEEE ComSoc Young Researcher Award for the Europe, Middle East, and Africa Region (EMEA). He is an editor of the IEEE transactions on wireless communications.
\end{IEEEbiography}

\begin{IEEEbiography}
[{\includegraphics[width=1in,clip,keepaspectratio]{./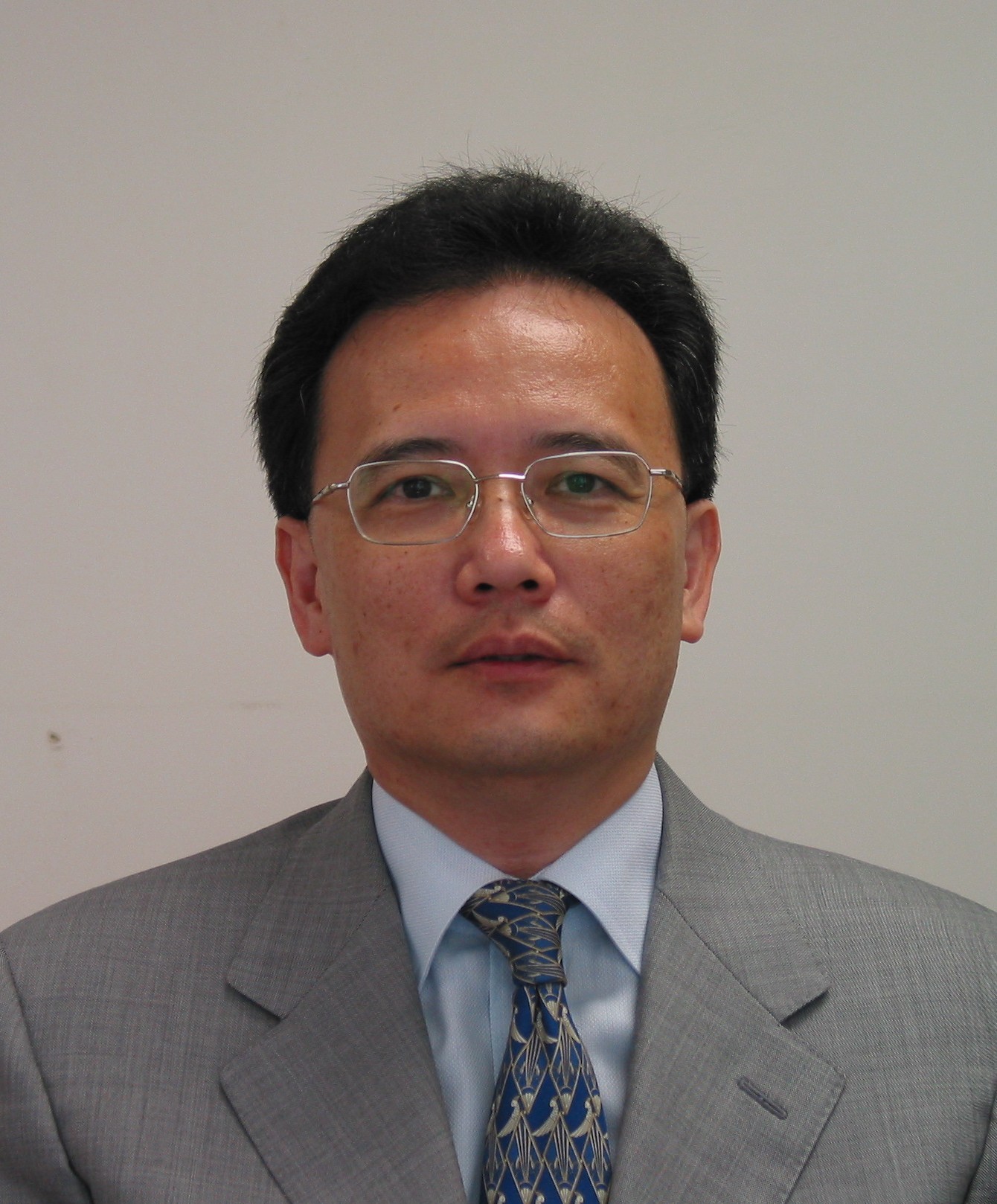}}]
{Victor O.K. Li} (S'80-M'81-F'92) received SB, SM, EE and ScD degrees in Electrical Engineering and Computer Science from MIT in 1977, 1979, 1980, and 1981, respectively. He is Chair Professor of Information Engineering, Cheng Yu-Tung Professor in Sustainable Development,  and Head of the Department of Electrical and Electronic Engineering at the University of Hong Kong (HKU).  He has also served as Assoc. Dean of Engineering and Managing Director of Versitech Ltd., the technology transfer and commercial arm of HKU.  He served on the board of China.com Ltd., and now serves on the board of Sunevision Holdings Ltd., listed on the Hong Kong Stock Exchange.   Previously, he was Professor of Electrical Engineering at the University of Southern California (USC), Los Angeles, California, USA, and Director of the USC Communication Sciences Institute. His research is in the technologies and applications of information technology, including clean energy and environment, social networks, wireless networks, and optimization techniques.  Sought by government, industry, and academic organizations, he has lectured and consulted extensively around the world.   He has received numerous awards, including the PRC Ministry of Education Changjiang Chair Professorship at Tsinghua University, the UK Royal Academy of Engineering Senior Visiting Fellowship in Communications, the Croucher Foundation Senior Research Fellowship, and the Order of the Bronze Bauhinia Star, Government of the Hong Kong Special Administrative Region, China.  He is a Registered Professional Engineer and a Fellow of the Hong Kong Academy of Engineering Sciences, the IEEE, the IAE, and the HKIE.
\end{IEEEbiography}

\end{document}